\newtheorem{lemma}{Lemma}
\begin{document}
\title{Beamforming Design for Full-Duplex Two-Way Amplify-and-Forward MIMO Relay}

\author{Yeonggyu Shim, \emph{Student Member, IEEE}, Wan Choi, \emph{Senior Member, IEEE}, and Hyuncheol Park, \emph{Senior Member, IEEE}\thanks{Y. Shim, W. Choi, and H. Park are with the School of Electrical Engineering, Korea Advanced Institute of Science and Technology (KAIST), 291 Daehak-ro, Yuseong-gu, Daejeon 34141, Korea (e-mail: ygshim@kaist.ac.kr; wchoi@kaist.edu; hcpark@kaist.ac.kr).}}

\maketitle
\vspace{-0.5in}
\begin{abstract}
We consider the full-duplex (FD) two-way amplify-and-forward relay system with imperfect cancelation of loopback self-interference (SI) and investigate joint design of relay and receive beamforming for minimizing the mean square error under a relay transmit power constraint. Due to loopback channel estimation error and limitation of analog-to-digital converter, the loopback SI cannot be completely canceled. Multiple antennas at the relay can help loopback SI suppression but beamforming is required to balance between the residual SI suppression and the desired signal transmission. Moreover, the relay beamforming matrix should be updated every time slot because the residual SI in the previous time slot is amplified by the current beamforming matrix and added to the received signals from the two sources in the current time slot. We derive the optimally balanced relay beamforming and receive beamforming matrices in closed form based on minimum mean square error, taking into account the propagation of the residual loopback SI from the first to the current time slot. We also propose beamforming design using only the channels of the $m$ latest time slots, not from the first time slot. Based on our numerical results, we also identify when FD is beneficial and propose selection between FD and half-duplex according to signal-to-noise ratio and interference-to-noise ratio. 
\end{abstract}

%\begin{keywords} Two-way relay channel, full-duplex, amplify-and-forward, beamforming, minimum mean square error. \end{keywords}

\IEEEpeerreviewmaketitle

\section{Introduction}
\PARstart{T}{wo-way} relaying systems achieve higher spectral efficiency than one-way relaying systems. 
In two-way relay channel (TWRC), two source nodes exchange messages with each other via a relay, using various multiple access broadcast (MABC) protocols consisting of the multiple access (MAC) phase and the broadcast (BC) phase \cite{Laneman, Rankov1, Rankov2, Tarokh1, Tarokh2}. 
Basically, in the MAC phase, two source nodes simultaneously send their signals to the relay node, and then the relay node amplifies and forwards the signal received in the MAC phase to the two source nodes in the BC phase. With a prior knowledge of its own transmitted signal, each source node is able to cancel the propagated self-interference (SI) back to it and decode the desired signal.

Because full-duplex (FD) systems allow concurrent transmission and reception in the same frequency band, they usually achieve higher spectral efficiency than half-duplex (HD) systems only if the loopback SI due to the concurrent transmission and reception is properly suppressed. Using a circulator which connects antenna to transceiver, a single antenna can be shared between the transmitter and the receiver in FD systems. 

The merits of FD systems and two-way relaying systems in terms spectral efficiency motivated two-way relaying with a FD relay which outperforms both two-way relaying with a HD relay and one-way relaying with a FD relay. In \cite{TWRC_FD_no_SI, TWRC_FD_power_allocation, TWRC_FD_relay_mode, TWRC_FD_outage_probability}, the two-way FD amplify-and-forward (AF) relay systems with two sources and one relay with a single antenna each were studied. Power allocation to maximize the achievable rate was explored in \cite{TWRC_FD_no_SI} but 
residual SI at each node was not considered assuming perfect SI suppression. However, given loopback channel estimation errors and limitation of analog-to-digital converter (ADC) in practical environments, the assumption of perfect SI suppression is too ideal to evaluate practical gains of two-way FD relaying. 
Contrary to \cite{TWRC_FD_no_SI}, power allocation addressing residual SI due to imperfect SI cancelation was investigated under a sum power constraint in \cite{TWRC_FD_power_allocation}. To maximize the sum rate in the presence of residual SI, power allocation under individual power constraints and relay mode selection, among one-way HD, two-way HD, one-way FD, and two-way FD, were proposed in \cite{TWRC_FD_relay_mode}. As another performance metric in the presence of residual SI, in \cite{TWRC_FD_outage_probability}, the outage probability of the two-way FD AF relay system was derived. The application of physical layer network coding to the two-way FD AF relay was investigated in terms of bit error rate \cite{TWRC_FD_PNC} when there exists residual SI. A two-way FD AF relay system with multiple relays was addressed in \cite{TWRC_FD_relay_selection} and an optimal relay selection scheme was proposed considering residual SI.

Although residual SI severely hurts the merit of FD, addressing the residual SI only with single antenna at a FD AF relay has limitations. On the other hand, beamforming with multiple antennas enables to suppress the residual SI, which motivated the study of a two-way FD AF relay system with a relay having multiple antennas. However, to the authors' best knowledge, there are only few studies on the two-way FD AF relay system with multiple antennas. In \cite{TWRC_FD_joint_beamforming}, a two-way FD AF relay system  with multiple antennas was studied, where two sources have a single antenna each and the relay has multiple antennas. To maximize the achievable sum rate, beamforming at relay and power allocation at sources were jointly optimized with a zero forcing (ZF) constraint such that the residual loopback SI is perfectly canceled at the relay. However, a closed form solution was not available and thus an iterative algorithm was proposed in \cite{TWRC_FD_joint_beamforming}.  

Contrary to the beamforming matrix which perfectly cancel out the residual SI with a ZF constraint, beamforming at a FD relay is required to balance between the residual SI suppression and the desired signal transmission. In this context,
in a two-way FD AF relay system with multiple antennas with circulator at the relay as well as the two sources, we design the beamforming matrix balancing them in terms of minimum mean square error (MMSE) under a power constraint at the relay.
 Because multiple antennas at the two sources are considered unlike \cite{TWRC_FD_joint_beamforming}, the beamforming matrix also need to be designed to address the inter-stream interference in the multi-input multi-output (MIMO) channels.
Moreover, to optimally balance the residual SI suppression and the desired signal transmission, the beamforming matrix should be updated every time slot because the residual SI in the previous time slot is amplified by the current beamforming matrix and added to the received signals from the two sources in the current time slot.

In this paper, we analytically identify this coupled effect and show that the optimal beamforming matrix at each time slot is determined as a function of not only the current channels but also all the channels in the past time slots from the first transmission time slot. Using the Lagrangian method, we derive the optimized relay beamforming matrix and receive beamforming matrices at sources in closed form. To account for beamforming design with limited memory, 
we also propose a beamforming method using the channels in a limited number of the latest time slots, not from the first time slot, and show that the performance degradation is marginal relative to the beamforming method based on all the channel matrices from the first time slot to the current time slot. 
Finally, from our numerical results, we identify the signal-to-noise ratio (SNR) and interference-to-noise ratio (INR) region where the achievable sum rate of the proposed scheme is greater than that of the HD system. Based on this result, we propose a duplex mode selection scheme in terms of achievable sum rate according to SNR and INR.

The rest of this paper is organized as follows.
Section II describes the system model. In Section III, we derive the optimal beamforming matrix in terms of MMSE and explore beamforming matrix design with limited memory size. In Section IV, we present numerical results. Finally, conclusions are drawn in Section V.

\emph{Notations}:
Matrices and vectors are denoted, respectively, by uppercase and lowercase boldface characters (e.g., \textbf{A} and \textbf{a}).
The transpose, Hermitian, and inverse of \textbf{A} are denoted, respectively, by $\textbf{A}^T$, $\textbf{A}^{H}$, and $\textbf{A}^{-1}$.
An $N \times N$ identity matrix and $N \times N$ matrix consisting of all zero entries are denoted, respectively, by $\textbf{I}_N$ and $\textbf{0}_N$.
The operators $\mathbb{E}[\cdot]$, $\text{tr}(\cdot)$, $\text{det}(\cdot)$, $\text{vec}(\cdot)$, $\text{mat}(\cdot)$, $\otimes$, and $\prod$ indicate the expectation, trace of a matrix, determinant of a matrix, matrix vectorization, inverse operation of $\text{vec}(\cdot)$, Kronecker product, and sequence product operators, respectively.
Notations $||\textbf{a}||$ and $||\textbf{A}||_{\mathcal F} $ denotes 2-norm of \textbf{a} and the Frobenius norm of \textbf{A}, respectively.

\section{System model}

\begin{figure} [t!]
\centering
\includegraphics[width=14.0cm]{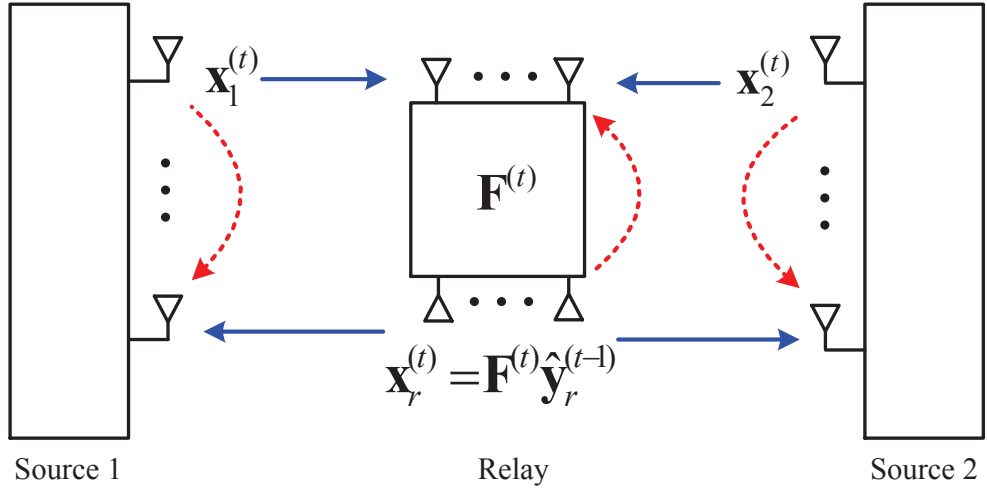}
\caption{Full-duplex two-way amplify-and-forward relay system in time slot $t$.}
\label{fig_system}
\end{figure}

As shown in Fig. \ref{fig_system}, consider a two-way FD AF relaying system consisting of two source nodes $s_1$ and $s_2$ with $N_s$ antennas, and one relay node $r$ with $N_r$ antennas, where all nodes operate in the FD mode. 
In time slot 0, the two sources transmit their signals to the relay simultaneously.
In time slot 1, the relay forwards its received signal after multiplying it by a beamforming matrix to the source nodes and two source nodes transmit their next signals to the relay at the same time.
In this manner, the two source nodes repeatedly exchange their information with each other via relay. In the $t$th time slot, source 1, source 2, and the relay transmit their signals $\textbf{x}_1^{(t)} \in \mathbb{C}^{N_s \times 1}$, $\textbf{x}_2^{(t)} \in \mathbb{C}^{N_s \times 1}$, and $\textbf{x}_r^{(t)} \in \mathbb{C}^{N_r \times 1}$, respectively, of which covariance matrices are given by
$\mathbb{E}[\textbf{x}_i^{(t)}\textbf{x}_i^{(t)^H}]=p_i\textbf{I}_{N_s}$ for $i \in \{ 1,2 \}$ and $\mathbb{E}[\textbf{x}_r^{(t)}\textbf{x}_r^{(t)^H}]=p_r\textbf{I}_{N_r}$.
The noise signals at source $1$, source $2$, and the relay are denoted as $\textbf{n}_i^{(t)} \in \mathbb{C}^{N_s \times 1}$ for $i \in \{ 1,2 \}$ and $\textbf{n}_r^{(t)} \in \mathbb{C}^{N_r \times 1}$, respectively, and are assumed to be zero-mean white Gaussian noise (AWGN) with $\mathbb{E}[\textbf{n}_i^{(t)}\textbf{n}_i^{(t)^H}]=\sigma_{n,i}^2\textbf{I}_{N_s}$ for $i \in \{ 1,2 \}$ and $\mathbb{E}[\textbf{n}_r^{(t)}\textbf{n}_r^{(t)^H}]=\sigma_{n,r}^2\textbf{I}_{N_r}$.

The channel links are modeled as independent and frequency-flat fading channels and assumed to be static in each time slot.
In time slot $t$, the channel matrices between source $1$ and the relay, source $2$ and the relay, the relay and source $1$, and the relay and source $2$ are represented as
$\textbf{H}_{1,r}^{(t)}, \textbf{H}_{2,r}^{(t)} \in \mathbb{C}^{N_r \times N_s}$ and $\textbf{H}_{r,1}^{(t)}, \textbf{H}_{r,2}^{(t)} \in \mathbb{C}^{N_s \times N_r}$, respectively.
Also, the loopback channel matrices at source 1, source 2, and the relay are represented as $\textbf{H}_{i,i}^{(t)} \in \mathbb{C}^{N_s \times N_s}$ for $i \in \{ 1,2 \}$ and $\textbf{H}_{r,r}^{(t)} \in \mathbb{C}^{N_r \times N_r}$, respectively.

The channel state information (CSI) of the loopback channels is assumed to be imperfect due to channel estimation errors, whereas, to focus only on the residual loopback SI related to full-duplex operation, CSI between two nodes is assumed to be perfectly known at each node. 
Since each node knows its own transmitted signals and perfect CSI between two nodes, the backward propagated SI can be canceled perfectly.
However, due to imperfect CSI of the loopback channels, each node cannot perfectly cancel their loopback SI.

In a FD relay system, a loopback channel exists at each node.
The imperfect loopback channel estimation results in loopback SI, but the imperfect backward channel (i.e., channel between two sources via relay) estimation errors are not related to loopback SI. To focus on the effect of loopback SI in FD relay systems, we consider imperfect loopback channel estimation but perfect backward channel estimation is assumed.
Moreover, as the number of involved time slots increase, the impact of the loopback channel estimation errors dominates that of the backward channel estimation errors.

The relationship between the exact channel and the estimated channel is modeled as $\textbf{H}_{i,i}^{(t)}=\hat{\textbf{H}}_{i,i}^{(t)}+\boldsymbol{\Delta}_{i,i}^{(t)}$ for $i \in \{ 1,2,r \}$, where $\hat{\textbf{H}}_{j,j}^{(t)} \in \mathbb{C}^{N_s \times N_s}$ for $j \in \{ 1,2 \}$ and $\hat{\textbf{H}}_{r,r}^{(t)} \in \mathbb{C}^{N_r \times N_r}$ are the estimated channel matrices, and $\boldsymbol{\Delta}_{j,j}^{(t)} \in \mathbb{C}^{N_s \times N_s}$ and $\boldsymbol{\Delta}_{r,r}^{(t)} \in \mathbb{C}^{N_r \times N_r}$ are channel estimation error matrices.
It is assumed that $\boldsymbol{\Delta}_{i,i}^{(t)}$ is Gaussian distributed with zero mean, and
$\mathbb{E}[\text{vec}(\boldsymbol{\Delta}_{j,j}^{(t)})\text{vec}(\boldsymbol{\Delta}_{j,j}^{(t)^H})]=\sigma_{e,j}^2\textbf{I}_{N_s^2}$
and
$\mathbb{E}[\text{vec}(\boldsymbol{\Delta}_{r,r}^{(t)})\text{vec}(\boldsymbol{\Delta}_{r,r}^{(t)^H})]=\sigma_{e,r}^2\textbf{I}_{N_r^2}$.
Also, $\boldsymbol{\Delta}_{i,i}^{(t)}$ for $t \geq 1$ are independent of each other.

The transmission process starts from time slot 0, so the sources transmit their signals but the relay do not in time slot 0. Because the relay starts transmitting from time slot 1, FD operation substantially starts from time slot 1.
In time slot $0$, the two source nodes simultaneously broadcast their signals
$\textbf{x}_1^{(0)} \in \mathbb{C}^{N_s \times 1}$ and $\textbf{x}_2^{(0)} \in \mathbb{C}^{N_s \times 1}$ to the relay and then
the received signal $\textbf{y}_r^{(0)} \in \mathbb{C}^{N_r \times 1}$ at the relay becomes
\begin{align}
\textbf{y}_r^{(0)} = \textbf{H}_{1,r}^{(0)}\textbf{x}_1^{(0)} + \textbf{H}_{2,r}^{(0)}\textbf{x}_2^{(0)} + \textbf{n}_r^{(0)}.
\label{y_r_0}
\end{align} In time slot $1$, the relay forwards its received signal to the source nodes after multiplying it  by a relay beamforming matrix $\textbf{F}^{(1)} \in \mathbb{C}^{N_r \times N_r}$ and then the transmit signal $\textbf{x}_r^{(1)} \in \mathbb{C}^{N_r \times 1}$ at the relay is represented as
\begin{align}
\textbf{x}_r^{(1)} = \textbf{F}^{(1)}\textbf{y}_r^{(0)}
\label{x_r_1}
\end{align} of which transmit power is given by $\mathbb{E}[||\textbf{x}_r^{(1)}||^2]=N_rp_r$.  With imperfect loopback SI cancellation, the received signal at the relay in time slot $1$ is given by
\begin{align}
\hat{\textbf{y}}_r^{(1)} = \textbf{H}_{1,r}^{(1)}\textbf{x}_1^{(1)} + \textbf{H}_{2,r}^{(1)}\textbf{x}_2^{(1)} + \boldsymbol{\Delta}_{r,r}^{(1)}\textbf{x}_r^{(1)} + \textbf{n}_r^{(1)}.
\label{y_r_1}
\end{align}
Substituting \eqref{x_r_1} into \eqref{y_r_1} leads to
\begin{align}
\hat{\textbf{y}}_r^{(1)} =
\textbf{H}_{1,r}^{(1)}\textbf{x}_1^{(1)} + \textbf{H}_{2,r}^{(1)}\textbf{x}_2^{(1)} + \textbf{n}_r^{(1)} + \boldsymbol{\Delta}_{r,r}^{(1)}\textbf{F}^{(1)}(\textbf{H}_{1,r}^{(0)}\textbf{x}_1^{(0)} + \textbf{H}_{2,r}^{(0)}\textbf{x}_2^{(0)} + \textbf{n}_r^{(0)}).
\label{}
\end{align}

At source $l$ in time slot $1$, subtracting the propagated SI back to the source node $\textbf{H}_{r,l}^{(1)}\textbf{F}^{(1)}\textbf{H}_{\bar{l},r}^{(0)}\textbf{x}_l^{(0)}$,  and the loopback SI $\hat{\textbf{H}}_{l,l}^{(1)}\textbf{x}_l^{(1)}$, the received signal at source $l$ is given by
\begin{align}
\hat{\textbf{y}}_l^{(1)} = \textbf{H}_{r,l}^{(1)}\textbf{F}^{(1)}\textbf{H}_{\bar{l},r}^{(0)}\textbf{x}_{\bar{l}}^{(0)} + \textbf{H}_{r,l}^{(1)}\textbf{F}^{(1)}\textbf{n}_r^{(0)} + \boldsymbol{\Delta}_{l,l}^{(1)}\textbf{x}_l^{(1)}  + \textbf{n}_l^{(1)},
\label{}
\end{align}
where $\bar{l}$ means the index of the other source (i.e., $\bar{l}=2$ if $l=1$ and $\bar{l}=1$ if $l=2$).
Source $l$ multiplies $\hat{\textbf{y}}_l^{(1)}$ by a receive beamforming matrix $\textbf{R}_l^{(1)} \in \mathbb{C}^{N_s \times N_s}$.

Accordingly, considering imperfect loopback SI cancellation, the received signal at the relay in time slot $t-1$ ($t \geq 2$) is given by
\begin{align}
\hat{\textbf{y}}_{r}^{(t-1)} =
\textbf{H}_{1,r}^{(t-1)}\textbf{x}_1^{(t-1)} + \textbf{H}_{2,r}^{(t-1)}\textbf{x}_2^{(t-1)} + \textbf{n}_r^{(t-1)} 
+ \sum_{i=0}^{t-2}  \bigg\{  \prod_{j=1}^{t-1-i} ( \boldsymbol{\Delta}_{r,r}^{(t-j)}\textbf{F}^{(t-j)} )
(\textbf{H}_{1,r}^{(i)}\textbf{x}_1^{(i)} + \textbf{H}_{2,r}^{(i)}\textbf{x}_2^{(i)} + \textbf{n}_r^{(i)} )  \bigg\}
\label{resursive}
\end{align} which shows that the current received signal is affected by the beamforming matrices in the past time slots. 
In time slot $t$, the forwarded signal from the relay to the source nodes is given by
\begin{align}
\textbf{x}_r^{(t)} = \textbf{F}^{(t)}\hat{\textbf{y}}_r^{(t-1)}
\label{}
\end{align} where  $\textbf{F}^{(t)} \in \mathbb{C}^{N_r \times N_r}$ is the relay beamforming matrix at time slot $t$ and $\textbf{x}_r^{(t)} \in \mathbb{C}^{N_r \times 1}$. Subtracting the propagated SI back to the source node $\textbf{H}_{r,l}^{(t)}\textbf{F}^{(t)}\textbf{H}_{\bar{l},r}^{(t-1)}\textbf{x}_l^{(t-1)}$ and the loopback SI $\hat{\textbf{H}}_{l,l}^{(t)}\textbf{x}_l^{(t)}$, the received signal at source $l$ in time slot $t$ ($t \geq 2$) is obtained as
\begin{equation}
\begin{split}
\hat{\textbf{y}}_l^{(t)} &= \textbf{H}_{r,l}^{(t)}\textbf{F}^{(t)}\textbf{H}_{\bar{l},r}^{(t-1)}\textbf{x}_{\bar{l}}^{(t-1)}
+ \textbf{H}_{r,l}^{(t)}\textbf{F}^{(t)}\sum_{i=0}^{t-2}  \bigg\{  \prod_{j=1}^{t-1-i} ( \boldsymbol{\Delta}_{r,r}^{(t-j)}\textbf{F}^{(t-j)} )
(\textbf{H}_{l,r}^{(i)}\textbf{x}_l^{(i)} + \textbf{H}_{\bar{l},r}^{(i)}\textbf{x}_{\bar{l}}^{(i)} + \textbf{n}_r^{(i)})  \bigg\}
\\& + \textbf{H}_{r,l}^{(t)}\textbf{F}^{(t)}\textbf{n}_r^{(t-1)} +
\boldsymbol{\Delta}_{l,l}^{(t)}\textbf{x}_l^{(t)} + \textbf{n}_l^{(t)}.
\label{}
\end{split}
\end{equation}
After subtracting SI, source $l$ multiplies $\hat{\textbf{y}}_l^{(t)}$ by a receive beamforming matrix $\textbf{R}_l^{(t)} \in \mathbb{C}^{N_s \times N_s}$.

\section{MMSE based beamforming design}

The problem for joint design of relay beamforming matrices and source receive beamforming matrices is non-convex and basically intractable. Therefore, we propose an iterative algorithm which decouples the primal problem into two subproblems and solve them alternately; one is for relay beamforming design and the other is for receive beamforming design at sources.

As identified in \eqref{resursive}, the current received signal is a function of  the beamforming matrices, the channel estimation error matrices, and transmitted signals at source in the past time slots, which implies that to optimally balance the residual SI suppression and the desired signal transmission, the beamforming matrix should be updated every time slot. Without loss of generality,  we design the relay beamforming matrix based on the $m$ latest time slots; if $m = t-1$, the beamforming matrix in time slot $t$ is based on all the time slots from the first time slot. Otherwise if $m< t-1$, the beamforming design corresponds to the scenario when the memory size at the relay is limited.
Unlike one-way relay systems, transmitted signals at each source during the past time slots become interference signals at each source in the current time slot in two-way relay systems. In addition, the relay transmits the received noise signals during the past time slots to the sources. Therefore, the minimum mean square error (MMSE) criterion is an effective metric enabling a compromise between interference suppression and noise reduction. Moreover, the MMSE criterion allows tractable analysis and optimization of the two-way FD AF relay system. For similar reasons, mean square error (MSE) has been popularly used for beamforming design in two-way relay systems in the literature such as \cite{lemma1_equation, MSE1, MSE2, scaling_factor1, MSE3, MSE4}.

To balance the residual SI suppression and the desired signal transmission, the relay beamforming matrix $\textbf{F}^{(t)}$ and receive beamforming matrices $\textbf{R}_i^{(t)}$ for $i \in \{ 1,2 \}$ in time slot $t ~(\geq 1)$ are determined based on the MSE criterion under a transmit power constraint at the relay. The corresponding optimization problem is formulated as
\begin{align}
\mathbb{P}^{(t)}:
\mathop{\mbox{min}}_{\textbf{F}^{(t)}, \alpha^{(t)}, \textbf{R}_1^{(t)}, \textbf{R}_2^{(t)}} J(\textbf{F}^{(t)}, \alpha^{(t)}, \textbf{R}_1^{(t)}, \textbf{R}_2^{(t)})
\ \ \text{s.t.}\ \ \mathbb{E}\big[||\textbf{x}_r^{(t)}||^{2}\big] = N_rp_r
\label{}
\end{align}
where the sum of MSE $J(\textbf{F}^{(t)},\alpha^{(t)},\textbf{R}_1^{(t)},\textbf{R}_2^{(t)})$ is defined as
\begin{align}
J(\textbf{F}^{(t)}, \alpha^{(t)}, \textbf{R}_1^{(t)}, \textbf{R}_2^{(t)}) \triangleq
\mathbb{E}\big[||\textbf{x}_2^{(t-1)}- \alpha^{(t)^{-1}}\textbf{R}_1^{(t)^H}\hat{\textbf{y}}_1^{(t)} ||^2 \big]
+ \mathbb{E}\big[||\textbf{x}_1^{(t-1)}- \alpha^{(t)^{-1}}\textbf{R}_2^{(t)^H}\hat{\textbf{y}}_2^{(t)} ||^2 \big]
\label{}
\end{align} and $\alpha^{(t)^{-1}}$ is a positive scaling factor. Note that $\hat{\textbf{y}}_l^{(t)}$ is multiplied by 
the positive scaling factor $\alpha^{(t)^{-1}}$ since the relay beamforming matrix 
$\textbf{F}^{(t)}$ is split into two components, as in \cite{scaling_factor1, scaling_factor2}, such that $\textbf{F}^{(t)}=\alpha^{(t)}\bar{\textbf{F}}^{(t)}$ to facilitate the derivation of a closed form solution, where $||\bar{\textbf{F}}^{(t)}||_{\mathcal F} =1$ and thus $\alpha^{(t)}$ and $\bar{\textbf{F}}^{(t)}$ represent power amplification and steering directions, respectively. 

The Lagrangian of the optimization problem is defined as 
\begin{equation}
\begin{split}
& \mathcal{L}(\textbf{F}^{(t)},\alpha^{(t)},\textbf{R}_1^{(t)},\textbf{R}_2^{(t)},\lambda^{(t)})
\\
&= J(\textbf{F}^{(t)},\alpha^{(t)},\textbf{R}_1^{(t)},\textbf{R}_2^{(t)}) + \lambda^{(t)}(\mathbb{E}\big[||\textbf{x}_r^{(t)}||^{2}\big]-N_rp_r)\\
 &=
N_s(p_1+p_2) - \alpha^{(t)^{-1}} \{ \text{tr}(\textbf{W}_{f,0}^{(t)^H}\textbf{F}^{(t)}) + \text{tr}(\textbf{W}_{f,0}^{(t)}\textbf{F}^{(t)^H}) \}
\\
&\quad+ \alpha^{(t)^{-2}} \{ \mathbb{E}\big[|| \textbf{R}_1^{(t)^H}\hat{\textbf{y}}_1^{(t)} ||^{2}\big] + \mathbb{E}\big[|| \textbf{R}_2^{(t)^H}\hat{\textbf{y}}_2^{(t)} ||^{2}\big] \}
+ \lambda^{(t)}(\mathbb{E}\big[||\textbf{x}_r^{(t)}||^{2}\big]-N_rp_r),
\label{L22}
\end{split}
\end{equation}
where $\lambda^{(t)}$ is a Lagrangian multiplier in time slot $t$, $\textbf{W}_{f,0}^{(t)} = p_1\textbf{H}_{r,2}^{(t)^H}\textbf{R}_2^{(t)}\textbf{H}_{1,r}^{(t-1)^H} + p_2\textbf{H}_{r,1}^{(t)^H}\textbf{R}_1^{(t)}\textbf{H}_{2,r}^{(t-1)^H}$, 

\begin{equation}
\begin{split}
\mathbb{E}\big[|| \textbf{R}_1^{(t)^H}\hat{\textbf{y}}_1^{(t)} ||^{2}\big] &=
\mathbb{E} \big[ \text{tr} \big\{ \textbf{R}_1^{(t)^H}\textbf{H}_{r,1}^{(t)}\textbf{F}^{(t)}
(\textbf{G}_0^{(t)} + p_2\textbf{H}_{2,r}^{(t-1)}\textbf{H}_{2,r}^{(t-1)^H} + \sigma_{n,r}^2\textbf{I}_{N_r})
\textbf{F}^{(t)^H}\textbf{H}_{r,1}^{(t)^H}\textbf{R}_1^{(t)} \big\} \big]
\\
&+ p_1 \mathbb{E} \big[ \text{tr} (\textbf{R}_1^{(t)^H}\boldsymbol{\Delta}_{1,1}^{(t)}\boldsymbol{\Delta}_{1,1}^{(t)^H}\textbf{R}_1^{(t)}) \big]
+ \sigma_{n,1}^2\text{tr}(\textbf{R}_1^{(t)^H}\textbf{R}_1^{(t)}),
\\
\mathbb{E}\big[|| \textbf{R}_2^{(t)^H}\hat{\textbf{y}}_2^{(t)} ||^{2}\big] &=
\mathbb{E}\big[ \text{tr} \big\{ \textbf{R}_2^{(t)^H}\textbf{H}_{r,2}^{(t)}\textbf{F}^{(t)}
(\textbf{G}_0^{(t)} + p_1\textbf{H}_{1,r}^{(t-1)}\textbf{H}_{1,r}^{(t-1)^H} + \sigma_{n,r}^2\textbf{I}_{N_r})
\textbf{F}^{(t)^H}\textbf{H}_{r,2}^{(t)^H}\textbf{R}_2^{(t)} \big\} \big]
\\
&+ p_2 \mathbb{E} \big[ \text{tr} (\textbf{R}_2^{(t)^H}\boldsymbol{\Delta}_{2,2}^{(t)}\boldsymbol{\Delta}_{2,2}^{(t)^H}\textbf{R}_2^{(t)}) \big]
+ \sigma_{n,2}^2\text{tr}(\textbf{R}_2^{(t)^H}\textbf{R}_2^{(t)}),
\\
\mathbb{E}\big[|| \textbf{x}_r^{(t)} ||^{2}\big] &=
\mathbb{E}\big[ \text{tr} \big\{ \textbf{F}^{(t)}
(\textbf{G}_0^{(t)} + p_1\textbf{H}_{1,r}^{(t-1)}\textbf{H}_{1,r}^{(t-1)^H} + p_2\textbf{H}_{2,r}^{(t-1)}\textbf{H}_{2,r}^{(t-1)^H} + \sigma_{n,r}^2\textbf{I}_{N_r})
\textbf{F}^{(t)^H}\big\} \big],
\label{exp_y1_y2_xr}
\end{split}
\end{equation} where
$\textbf{G}_0^{(1)}=\textbf{0}_{N_r}$,
\begin{equation}
\begin{split}
\textbf{G}_0^{(t)}
= &\sum_{i=0}^{t-2}  \bigg\{  \prod_{j=1}^{t-1-i}( \boldsymbol{\Delta}_{r,r}^{(t-j)}\textbf{F}^{(t-j)})
(p_1\textbf{H}_{1,r}^{(i)}\textbf{H}_{1,r}^{(i)^H} + p_2\textbf{H}_{2,r}^{(i)}\textbf{H}_{2,r}^{(i)^H} + \sigma_{n,r}^2\textbf{I}_{N_r})
\prod_{j=1}^{t-1-i}(\textbf{F}^{(i+j)^H}\boldsymbol{\Delta}_{r,r}^{(i+j)^H})
\bigg\}
\\
& \text{for} \ 2 \leq t \leq m+1, ~\textrm{and}
\\
\textbf{G}_0^{(t)}
= &\sum_{i=0}^{t-2-m} \bigg\{
\prod_{j=1}^{m}(\boldsymbol{\Delta}_{r,r}^{(t-j)}\textbf{F}^{(t-j)})
\prod_{j=m+1}^{t-1-i}(\boldsymbol{\Delta}_{r,r}^{(t-j)}\textbf{F}^{(t-m)})
\\
& (p_1\textbf{H}_{1,r}^{(t-1-m)}\textbf{H}_{1,r}^{(t-1-m)^H} + p_2\textbf{H}_{2,r}^{(t-1-m)}\textbf{H}_{2,r}^{(t-1-m)^H} + \sigma_{n,r}^2\textbf{I}_{N_r})
\\
& \prod_{j=1}^{t-1-m-i}(\textbf{F}^{(t-m)^H}\boldsymbol{\Delta}_{r,r}^{(i+j)^H})
\prod_{j=t-m}^{t-1}(\textbf{F}^{(j)^H}\boldsymbol{\Delta}_{r,r}^{(j)^H}) \bigg\}
\\
+ &\sum_{i=t-1-m}^{t-2}  \bigg\{  \prod_{j=1}^{t-1-i}( \boldsymbol{\Delta}_{r,r}^{(t-j)}\textbf{F}^{(t-j)})
(p_1\textbf{H}_{1,r}^{(i)}\textbf{H}_{1,r}^{(i)^H} + p_2\textbf{H}_{2,r}^{(i)}\textbf{H}_{2,r}^{(i)^H} + \sigma_{n,r}^2\textbf{I}_{N_r})
\prod_{j=1}^{t-1-i}(\textbf{F}^{(i+j)^H}\boldsymbol{\Delta}_{r,r}^{(i+j)^H})  \bigg\}
\\
& \text{for} \ t \geq m+2.
\end{split}
\end{equation}
Note that $m$ is the number of past time slots considered in the beamforming matrix design and the 
expectations in \eqref{exp_y1_y2_xr} are with respect to $\boldsymbol{\Delta}_{i,i}^{(t)}$ for $i \in \{ 1,2,r  \}$.

%%%%%%%%%%%%%%%%%%%%%%%%   Lemma 1   %%%%%%%%%%%%%%%%%%%%%%%%%
\begin{lemma}
Let $\boldsymbol{\Delta}_j$ be an $N \times N$ random matrix with $\mathbb{E}[\text{vec}(\boldsymbol{\Delta}_j)\text{vec}(\boldsymbol{\Delta}_j)^H]=\sigma^2\textbf{I}_{N^2}$
and $\boldsymbol{\Delta}_j$ for $j \geq 1$ are independent of each other.
\begin{align}
\mathbb{E}\bigg[ \text{tr} \bigg\{
\textbf{V}_v
\prod_{j=1}^{v-1} ( \boldsymbol{\Delta}_{v-j}\textbf{V}_{v-j} )
\prod_{j=1}^{v-1} ( \textbf{V}_j^H\boldsymbol{\Delta}_j^H )
\textbf{V}_v^H
\bigg\} \bigg]
=  (\sigma^2)^{v-1} \prod_{j=1}^{v}\text{tr}(\textbf{V}_j\textbf{V}_j^H)
\ \text{for} \ v \geq 2.
\label{lemma_1}
\end{align}
\end{lemma}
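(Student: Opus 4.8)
The plan is to recast the traced quantity as the squared Frobenius norm of a single product matrix and then integrate out the independent error matrices $\boldsymbol{\Delta}_j$ one at a time, working from the centre outward, using a single elementary second-moment identity together with induction on $v$. The whole argument rests on recognizing that the integrand is a perfect $\textbf{M}\textbf{M}^H$, so that each $\boldsymbol{\Delta}_j$ appears in a self-contained block that can be averaged in isolation.

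First I would record the key identity: for any matrix $\textbf{A}\in\mathbb{C}^{N\times N}$ that is independent of $\boldsymbol{\Delta}_j$,
\[
\mathbb{E}\big[\boldsymbol{\Delta}_j \textbf{A}\boldsymbol{\Delta}_j^H\big] = \sigma^2\,\text{tr}(\textbf{A})\,\textbf{I}_N .
\]
This follows directly from the hypothesis $\mathbb{E}[\text{vec}(\boldsymbol{\Delta}_j)\text{vec}(\boldsymbol{\Delta}_j)^H]=\sigma^2\textbf{I}_{N^2}$, which componentwise reads $\mathbb{E}[(\boldsymbol{\Delta}_j)_{ab}\overline{(\boldsymbol{\Delta}_j)_{cd}}]=\sigma^2\delta_{ac}\delta_{bd}$: writing out the $(p,q)$ entry of $\boldsymbol{\Delta}_j\textbf{A}\boldsymbol{\Delta}_j^H$ and taking expectation collapses the double sum to $\sigma^2\delta_{pq}\,\text{tr}(\textbf{A})$. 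Note this uses only the prescribed covariance structure, not Gaussianity. Reading the two sequence products next, the matrix inside the trace is $\textbf{M}\textbf{M}^H$ with $\textbf{M}=\textbf{V}_v\boldsymbol{\Delta}_{v-1}\textbf{V}_{v-1}\boldsymbol{\Delta}_{v-2}\cdots\boldsymbol{\Delta}_1\textbf{V}_1$, so $\boldsymbol{\Delta}_1$ sits at the exact centre, flanked symmetrically by $\textbf{V}_1$ and $\textbf{V}_1^H$, and the remaining error matrices are nested palindromically around it.

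I would then induct on $v$. For the base case $v=2$ the trace is $\text{tr}\{\textbf{V}_2\,\boldsymbol{\Delta}_1\textbf{V}_1\textbf{V}_1^H\boldsymbol{\Delta}_1^H\,\textbf{V}_2^H\}$; pulling $\mathbb{E}$ through the deterministic $\textbf{V}_2$ and applying the key identity with $\textbf{A}=\textbf{V}_1\textbf{V}_1^H$ gives $\sigma^2\,\text{tr}(\textbf{V}_1\textbf{V}_1^H)\,\text{tr}(\textbf{V}_2\textbf{V}_2^H)$, matching the claim. For the inductive step I condition on $\boldsymbol{\Delta}_2,\dots,\boldsymbol{\Delta}_{v-1}$ and take the inner expectation over $\boldsymbol{\Delta}_1$ alone; because the central block $\boldsymbol{\Delta}_1\textbf{V}_1\textbf{V}_1^H\boldsymbol{\Delta}_1^H$ contains no other error matrix, independence together with the identity replaces it by $\sigma^2\,\text{tr}(\textbf{V}_1\textbf{V}_1^H)\,\textbf{I}_N$. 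Absorbing $\textbf{V}_2\textbf{I}_N\textbf{V}_2^H=\textbf{V}_2\textbf{V}_2^H$, the surviving expectation is precisely an instance of the lemma with the $v-1$ matrices $(\textbf{V}_2,\dots,\textbf{V}_v)$ and errors $(\boldsymbol{\Delta}_2,\dots,\boldsymbol{\Delta}_{v-1})$; the induction hypothesis applies, and multiplying the two factors yields $(\sigma^2)^{v-1}\prod_{j=1}^v\text{tr}(\textbf{V}_j\textbf{V}_j^H)$.

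The computation is routine once the structure is seen, so the main obstacle is purely in the bookkeeping: correctly parsing the two sequence products to confirm that the integrand really is a genuine $\textbf{M}\textbf{M}^H$ whose error matrices nest palindromically around the centre. It is exactly this palindromic nesting that guarantees, at every stage, that the innermost surviving error matrix $\boldsymbol{\Delta}_k$ occurs only inside the self-contained block $\boldsymbol{\Delta}_k(\cdot)\boldsymbol{\Delta}_k^H$, which is what lets independence peel it off in isolation and invoke the identity. A secondary point to check is that the $\textbf{V}_j$ are deterministic (independent of every $\boldsymbol{\Delta}_j$), licensing their removal from the conditional expectation; in the application the $\textbf{V}_j$ are products of fixed beamforming matrices and known channels, so this holds.
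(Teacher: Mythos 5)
Your proof is correct, but it runs the induction in the opposite direction from the paper and rests on a different elementary tool. The paper peels off the \emph{outermost} error matrix $\boldsymbol{\Delta}_{k-1}$: it rewrites the trace via $\text{tr}(\textbf{A}\textbf{B}\textbf{C}\textbf{D})=\text{vec}(\textbf{C}^T)^T(\textbf{B}^T\otimes\textbf{D})\text{vec}(\textbf{A})$ as a quadratic form in $\text{vec}(\boldsymbol{\Delta}_{k-1})$, uses independence to replace $\mathbb{E}[\text{vec}(\boldsymbol{\Delta}_{k-1})\text{vec}(\boldsymbol{\Delta}_{k-1})^H]$ by $\sigma^2\textbf{I}_{N^2}$, and then applies $\text{tr}(\textbf{A}\otimes\textbf{B})=\text{tr}(\textbf{A})\text{tr}(\textbf{B})$ to split off the factor $\sigma^2\text{tr}(\textbf{V}_k\textbf{V}_k^H)$; its base case $v=2$ is imported from an external reference. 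You instead peel off the \emph{innermost} matrix $\boldsymbol{\Delta}_1$ with the identity $\mathbb{E}[\boldsymbol{\Delta}\textbf{A}\boldsymbol{\Delta}^H]=\sigma^2\text{tr}(\textbf{A})\textbf{I}_N$, derived componentwise from the stated covariance; this single identity handles the base case and the inductive step at once. Both arguments share the same skeleton --- induction on $v$, extracting one factor $\sigma^2\text{tr}(\textbf{V}_j\textbf{V}_j^H)$ per step, justified by the tower property and independence of the $\boldsymbol{\Delta}_j$ --- but yours is more self-contained, needing no vectorization or Kronecker-product identities and no citation for $v=2$, while the paper's vectorized form is the one that would adapt more directly if the error covariance were not a scaled identity, since there the trace no longer factors and one must keep the full matrix $\mathbb{E}[\text{vec}(\boldsymbol{\Delta})\text{vec}(\boldsymbol{\Delta})^H]$. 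Your parsing of the two sequence products as a palindromic $\textbf{M}\textbf{M}^H$, and your remark that the $\textbf{V}_j$ must be independent of the error matrices, are both exactly the points the paper relies on implicitly.
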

\begin{proof} We prove this by induction.
From \emph{Lemma I} in \cite{lemma_error}, \eqref{lemma_1} holds for $v=2$. That is,
$\mathbb{E} [ \text{tr} \{
\textbf{V}_2\boldsymbol{\Delta}_1\textbf{V}_1\textbf{V}_1^H\boldsymbol{\Delta}_1^H\textbf{V}_2^H
\} ]
= \sigma^2\text{tr}(\textbf{V}_1\textbf{V}_1^H)\text{tr}(\textbf{V}_2\textbf{V}_2^H).$
Assume \eqref{lemma_1} holds when $v=k-1$ and then we prove that it also holds when $v=k$. 

When $v=k$, the left hand side of \eqref{lemma_1} is rewritten as 
\begin{align}
&\text{LHS~of \eqref{lemma_1} } \nonumber\\ & \overset{(a)}{=} \mathbb{E}\bigg[ \text{tr} \bigg\{
\prod_{j=1}^{k-1} (\boldsymbol{\Delta}_{k-j}\textbf{V}_{k-j})
\prod_{j=1}^{k-1} (\textbf{V}_j^H\boldsymbol{\Delta}_j^H)
\textbf{V}_k^H\textbf{V}_k
\bigg\} \bigg]\label{lemma_proof_1}\\
&\overset{(b)}{=}\mathbb{E}\bigg[
\text{vec}(\boldsymbol{\Delta}_{k-1})^H
\bigg\{
\bigg(
\textbf{V}_{k-1}
\prod_{j=1}^{k-2} (\boldsymbol{\Delta}_{k-1-j}\textbf{V}_{k-1-j})
\prod_{j=1}^{k-2} (\textbf{V}_j^H\boldsymbol{\Delta}_j^H)
\textbf{V}_{k-1}^H
\bigg)^T
\otimes
(\textbf{V}_k^H\textbf{V}_k)
\bigg\}
\text{vec}(\boldsymbol{\Delta}_{k-1})
\bigg]\label{lemma_proof_2}\\
&\overset{(c)}{=} \text{tr}\left( \mathbb{E}\bigg[
\text{vec}(\boldsymbol{\Delta}_{k-1})^H
\bigg\{
\bigg(
\textbf{V}_{k-1}
\prod_{j=1}^{k-2} (\boldsymbol{\Delta}_{k-1-j}\textbf{V}_{k-1-j})
\prod_{j=1}^{k-2} (\textbf{V}_j^H\boldsymbol{\Delta}_j^H)
\textbf{V}_{k-1}^H
\bigg)^T
\otimes
(\textbf{V}_k^H\textbf{V}_k)
\bigg\}
\text{vec}(\boldsymbol{\Delta}_{k-1})
\bigg]\right)\\
&\overset{(d)}{=}\text{tr} \bigg\{
\mathbb{E}\bigg[
\bigg(
\textbf{V}_{k-1}
\prod_{j=1}^{k-2} (\boldsymbol{\Delta}_{k-1-j}\textbf{V}_{k-1-j})
\prod_{j=1}^{k-2} (\textbf{V}_j^H\boldsymbol{\Delta}_j^H)
\textbf{V}_{k-1}^H
\bigg)^T
\otimes
(\textbf{V}_k^H\textbf{V}_k)
\bigg]
\mathbb{E}\bigg[ \text{vec}(\boldsymbol{\Delta}_{k-1}) \text{vec}(\boldsymbol{\Delta}_{k-1})^H \bigg]
\bigg\}
\label{lemma_proof_3}\\
&\overset{(e)}{=}\sigma^2\text{tr}(\textbf{V}_k\textbf{V}_k^H)
\mathbb{E}\bigg[ \text{tr} \bigg\{
\textbf{V}_{k-1}
\prod_{j=1}^{k-2} (\boldsymbol{\Delta}_{k-1-j}\textbf{V}_{k-1-j})
\prod_{j=1}^{k-2} (\textbf{V}_j^H\boldsymbol{\Delta}_j^H)
\textbf{V}_{k-1}^H
\bigg\} \bigg]
\label{lemma_proof_4}
\end{align}
where (a) is due to $\text{tr}(\textbf{A}\textbf{B})=\text{tr}(\textbf{B}\textbf{A})$; (b) is due to  $\text{tr}(\textbf{A}\textbf{B}\textbf{C}\textbf{D})=\text{vec}(\textbf{C}^T)^T(\textbf{B}^T\otimes\textbf{D})\text{vec}(\textbf{A})$; (c) is because $c =\text{tr}(c)$ when $c$ is a constant; (d) follows from $\text{tr}(\mathbb{E}[\textbf{A}])=\mathbb{E}[\text{tr}(\textbf{A})]$,
$\text{tr}(\textbf{A}\textbf{B})=\text{tr}(\textbf{B}\textbf{A})$, and
independence of $\boldsymbol{\Delta}_j$; (e) holds because  $\text{tr}(\mathbb{E}[\textbf{A}])=\mathbb{E}[\text{tr}(\textbf{A})]$, $\mathbb{E}[\text{vec}(\boldsymbol{\Delta}_{k-1})\text{vec}(\boldsymbol{\Delta}_{k-1})^H]=\sigma^2\textbf{I}_{N^2}$,
$\text{tr}(\textbf{A}\otimes\textbf{B})=\text{tr}(\textbf{A})\text{tr}(\textbf{B})$, and
$\text{tr}(\textbf{A})^T=\text{tr}(\textbf{A})$. 

Finally, using the induction hypothesis that \eqref{lemma_1} holds for $k-1$, \eqref{lemma_proof_4} becomes
\begin{align}
(\sigma^2)^{k-1} \prod_{j=1}^{k}\text{tr}(\textbf{V}_j\textbf{V}_j^H)
\end{align}
\end{proof}

\emph{Remark:} \emph{Lemma 1} provides a mathematical tool to reach the main result. Specifically, to calculate MSE, the expectations on the received signals have to be calculated and \emph{Lemma 1} enables to calculate the expectations since the received signals are constituted by the product of unknown random matrices (i.e., channel estimation error matrices) and known deterministic matrices (i.e., channel matrices and relay beamforming matrices) in rotation, which is the form in \emph{Lemma 1}. Consequently, \emph{Lemma 1} can be applied to the system design in which the received signal is presented by the product of random matrices and deterministic matrices in rotation.

Using \emph{Lemma 1},
$\mathbb{E}\big[|| \textbf{R}_1^{(t)}\hat{\textbf{y}}_1^{(t)} ||^{2}\big]$, $\mathbb{E}\big[|| \textbf{R}_2^{(t)}\hat{\textbf{y}}_2^{(t)} ||^{2}\big]$, and $\mathbb{E}\big[|| \textbf{x}_r^{(t)} ||^{2}\big]$ are obtained, respectively, as
\begin{equation}
\begin{split}
\mathbb{E}\big[|| \textbf{R}_1^{(t)}\hat{\textbf{y}}_1^{(t)} ||^{2}\big] &=
\text{tr} \big\{ \textbf{H}_{r,1}^{(t)^H}\textbf{R}_1^{(t)}\textbf{R}_1^{(t)^H}\textbf{H}_{r,1}^{(t)}\textbf{F}^{(t)}
(\textbf{G}_c^{(t)} + p_2\textbf{H}_{2,r}^{(t-1)}\textbf{H}_{2,r}^{(t-1)^H}+\sigma_{n,r}^2\textbf{I}_{N_r})
\textbf{F}^{(t)^H} \big\}
\\
&+ (N_sp_1\sigma_{e,1}^2 + \sigma_{n,1}^2)\text{tr}(\textbf{R}_1^{(t)^H}\textbf{R}_1^{(t)}),
\\
\mathbb{E}\big[|| \textbf{R}_2^{(t)}\hat{\textbf{y}}_2^{(t)} ||^{2}\big] &=
\text{tr} \big\{ \textbf{H}_{r,2}^{(t)^H}\textbf{R}_2^{(t)}\textbf{R}_2^{(t)^H}\textbf{H}_{r,2}^{(t)}\textbf{F}^{(t)}
(\textbf{G}_c^{(t)} + p_1\textbf{H}_{1,r}^{(t-1)}\textbf{H}_{1,r}^{(t-1)^H}+\sigma_{n,r}^2\textbf{I}_{N_r})
\textbf{F}^{(t)^H} \big\}
\\
&+ (N_sp_2\sigma_{e,2}^2 + \sigma_{n,2}^2)\text{tr}(\textbf{R}_2^{(t)^H}\textbf{R}_2^{(t)}),
\\
\mathbb{E}\big[|| \textbf{x}_r^{(t)} ||^{2}\big] &=
\text{tr} \big\{ \textbf{F}^{(t)}
(\textbf{G}_c^{(t)} + p_1\textbf{H}_{1,r}^{(t-1)}\textbf{H}_{1,r}^{(t-1)^H} + p_2\textbf{H}_{2,r}^{(t-1)}\textbf{H}_{2,r}^{(t-1)^H}+\sigma_{n,r}^2\textbf{I}_{N_r})
\textbf{F}^{(t)^H}\big\},
\label{exp_y1_y2_xr_W}
\end{split}
\end{equation}
where
\begin{align}
\begin{split}
&\textbf{G}_c^{(t)}
\\
= &\mathbf{\Gamma}_1 \sigma_{e,r}^2\textbf{I}_{N_r}\text{tr} \{ \textbf{F}^{(t-1)}(p_1\textbf{H}_{1,r}^{(t-2)}\textbf{H}_{1,r}^{(t-2)^H}+p_2\textbf{H}_{2,r}^{(t-2)}\textbf{H}_{2,r}^{(t-2)^H}+\sigma_{n,r}^2\textbf{I}_{N_r})\textbf{F}^{(t-1)^H} \}
\\
+ &\mathbf{\Gamma}_2 \sum_{i=2}^{\text{min}(m,t-1)} 
\bigg[ (\sigma_{e,r}^2)^i\textbf{I}_{N_r}
\prod_{j=t+1-i}^{t-1} \text{tr}(\textbf{F}^{(j)}\textbf{F}^{(j)^H})
\\
& \text{tr} \{ \textbf{F}^{(t-i)}(p_1\textbf{H}_{1,r}^{(t-i)}\textbf{H}_{1,r}^{(t-i)^H}+p_2\textbf{H}_{2,r}^{(t-i)}\textbf{H}_{2,r}^{(t-i)^H}+\sigma_{n,r}^2\textbf{I}_{N_r})\textbf{F}^{(t-i)^H} \} \bigg]
\\
+ &\mathbf{\Gamma}_3 \sum_{i=2}^{t-1} 
\bigg[ (\sigma_{e,r}^2)^i\textbf{I}_{N_r}
\prod_{j=t+1-i}^{t-1} \text{tr}(\textbf{F}^{(j)}\textbf{F}^{(j)^H})
\\
& \text{tr} \{ \textbf{F}^{(t-i)}(p_1\textbf{H}_{1,r}^{(t-1-i)}\textbf{H}_{1,r}^{(t-1-i)^H}+p_2\textbf{H}_{2,r}^{(t-1-i)}\textbf{H}_{2,r}^{(t-1-i)^H}+\sigma_{n,r}^2\textbf{I}_{N_r})\textbf{F}^{(t-i)^H} \} \bigg]
\label{}
\end{split}
\end{align} and $\mathbf{\Gamma}_1$, $\mathbf{\Gamma}_2$, and $\mathbf{\Gamma}_3$ are defined as
\begin{equation}
\begin{split}
\mathbf{\Gamma}_1=\mathbf{\Gamma}_2=\mathbf{\Gamma}_3=\textbf{0}_{N_r}             \ &\text{for} \ t=1 \\
\mathbf{\Gamma}_1=\textbf{I}_{N_r} \ \text{and} \ \mathbf{\Gamma}_2=\mathbf{\Gamma}_3=\textbf{0}_{N_r} \ &\text{for} \ t=2 \\
\mathbf{\Gamma}_1=\mathbf{\Gamma}_3=\textbf{I}_{N_r} \ \text{and} \ \mathbf{\Gamma}_2=\textbf{0}_{N_r} \ &\text{for} \
t \geq 3 \ \text{and} \ m=1 \\
\mathbf{\Gamma}_1=\mathbf{\Gamma}_2=\textbf{I}_{N_r} \ \text{and} \ \mathbf{\Gamma}_3=\textbf{0}_{N_r} \ &\text{for} \ 3  \leq t \leq m+1 \\
\mathbf{\Gamma}_1=\mathbf{\Gamma}_2=\mathbf{\Gamma}_3=\textbf{I}_{N_r} \ &\text{for} \ t \geq m+2 
\label{}
\end{split}
\end{equation}

\subsection{Relay beamforming design}

The problem of relay beamforming matrix design in each time slot is formulated as
\begin{align}
\mathbb{P}_f^{(t)}:
\mathop{\mbox{min}}_{\textbf{F}^{(t)},\alpha^{(t)}} J(\textbf{F}^{(t)},\alpha^{(t)})
\ \ \text{s.t.}\ \ \mathbb{E}\big[||\textbf{x}_r^{(t)}||^{2}\big] = N_rp_r
\label{}
\end{align}
Using \eqref{exp_y1_y2_xr_W} and substituting $\textbf{F}^{(t)}=\alpha^{(t)}\bar{\textbf{F}}^{(t)}$, the Lagrangian in \eqref{L22} is reduced to
\begin{equation}
\begin{split}
\mathcal{L}(\bar{\textbf{F}}^{(t)},\alpha^{(t)},\lambda^{(t)}) &=
N_s(p_1+p_2) - 
\{ \text{tr}(\textbf{W}_{f,0}^{(t)^H}\bar{\textbf{F}}^{(t)}) + \text{tr}(\textbf{W}_{f,0}^{(t)}\bar{\textbf{F}}^{(t)^H}) \}
\\
&+  
\text{tr}(\textbf{W}_{f,1}^{(t)}\bar{\textbf{F}}^{(t)}\textbf{G}_1^{(t)}\bar{\textbf{F}}^{(t)^H})+
\text{tr}(\textbf{W}_{f,2}^{(t)}\bar{\textbf{F}}^{(t)}\textbf{G}_2^{(t)}\bar{\textbf{F}}^{(t)^H})
\\
&+ \alpha^{(t)^{-2}}w_f^{(t)}
+ \lambda^{(t)} \{ \alpha^{(t)^2} \text{tr}(\bar{\textbf{F}}^{(t)}\textbf{G}_r^{(t)}\bar{\textbf{F}}^{(t)^H})-N_rp_r \},
\label{Lag}
\end{split}
\end{equation}
where
\begin{equation}
\begin{split}
& \textbf{G}_1^{(t)} = \textbf{G}_c^{(t)} + p_2\textbf{H}_{2,r}^{(t-1)}\textbf{H}_{2,r}^{(t-1)^H} + \sigma_{n,r}^2\textbf{I}_{N_r}, \\
& \textbf{G}_2^{(t)} = \textbf{G}_c^{(t)} + p_1\textbf{H}_{1,r}^{(t-1)}\textbf{H}_{1,r}^{(t-1)^H} + \sigma_{n,r}^2\textbf{I}_{N_r}, \\
& \textbf{G}_r^{(t)} = \textbf{G}_c^{(t)} + p_1\textbf{H}_{1,r}^{(t-1)}\textbf{H}_{1,r}^{(t-1)^H} + p_2\textbf{H}_{2,r}^{(t-1)}\textbf{H}_{2,r}^{(t-1)^H} + \sigma_{n,r}^2\textbf{I}_{N_r},  \\
& \textbf{W}_{f,1}^{(t)} = \textbf{H}_{r,1}^{(t)^H}\textbf{R}_1^{(t)}\textbf{R}_1^{(t)^H}\textbf{H}_{r,1}^{(t)}, \\
& \textbf{W}_{f,2}^{(t)} = \textbf{H}_{r,2}^{(t)^H}\textbf{R}_2^{(t)}\textbf{R}_2^{(t)^H}\textbf{H}_{r,2}^{(t)}, \\
& w_f^{(t)} = (N_sp_1\sigma_{e,1}^2+\sigma_{n,1}^2)\text{tr}(\textbf{R}_1^{(t)}\textbf{R}_1^{(t)^H})
          + (N_sp_2\sigma_{e,2}^2+\sigma_{n,2}^2)\text{tr}(\textbf{R}_2^{(t)}\textbf{R}_2^{(t)^H}). \\
\label{G}
\end{split}
\end{equation}

For the Lagrangian in \eqref{Lag}, we have three conditions of optimality as
$\frac{\partial \mathcal{L}(\bar{\textbf{F}}^{(t)}, \alpha^{(t)}, \lambda^{(t)})} {\partial \bar{\textbf{F}}^{(t)}}=\textbf{0}_{N_r}$,
$\frac{\partial \mathcal{L}(\bar{\textbf{F}}^{(t)}, \alpha^{(t)}, \lambda^{(t)})} {\partial \alpha^{(t)}}=0$, and
$\frac{\partial \mathcal{L}(\bar{\textbf{F}}^{(t)}, \alpha^{(t)}, \lambda^{(t)})} {\partial \lambda^{(t)}}=0$. Using the linear and nonlinear properties of the complex matrix derivative \cite{matrix_derivative} and the cyclic permutation and linearity of the trace function, these conditions are reduced to
\begin{align}
& \textbf{W}_{f,1}^{(t)}\bar{\textbf{F}}^{(t)}\textbf{G}_1^{(t)}
+ \textbf{W}_{f,2}^{(t)}\bar{\textbf{F}}^{(t)}\textbf{G}_2^{(t)}
+ \lambda^{(t)}\alpha^{(t)^2}\bar{\textbf{F}}^{(t)}\textbf{G}_r^{(t)}
= \textbf{W}_{f,0}^{(t)}
\label{L1}
\\
& \text{tr}(\bar{\textbf{F}}^{(t)}\textbf{G}_r^{(t)}\bar{\textbf{F}}^{(t)^H})
= \lambda^{(t)^{-1}}\alpha^{(t)^{-4}}w_f^{(t)}
\label{L2}
\\
& \text{tr}(\bar{\textbf{F}}^{(t)}\textbf{G}_r^{(t)}\bar{\textbf{F}}^{(t)^H})
= \alpha^{(t)^{-2}}N_rp_r
\label{L3}
\end{align}
From \eqref{L2} and \eqref{L3}, we have 
\begin{align}
\lambda^{(t)}\alpha^{(t)^2}=(N_rp_r)^{-1}w_f^{(t)}
\label{L23}
\end{align} and substituting \eqref{L23} into \eqref{L1}, we obtain
\begin{align}
\textbf{W}_{f,1}^{(t)}\bar{\textbf{F}}^{(t)}\textbf{G}_1^{(t)}
+ \textbf{W}_{f,2}^{(t)}\bar{\textbf{F}}^{(t)}\textbf{G}_2^{(t)}
+ (N_rp_r)^{-1}w_f^{(t)}\bar{\textbf{F}}^{(t)}\textbf{G}_r^{(t)}
= \textbf{W}_{f,0}^{(t)}.
\label{L1_2}
\end{align}
To solve \eqref{L1_2}, we use the following lemma from \cite{lemma1_equation}.
\begin{lemma}[\cite{lemma1_equation}]
Let $\textbf{A}_{1,i} \in \mathbb{C}^{N_1 \times N_1}$, $\textbf{A}_{2,i} \in \mathbb{C}^{N_2 \times N_2}$, and $\textbf{A}_3 \in \mathbb{C}^{N_1 \times N_2}$ for $1 \leq i \leq k$.
A matrix $\textbf{X} \in \mathbb{C}^{N_1 \times N_2}$ is a solution of the general linear matrix equation.
\begin{align}
\textbf{A}_{1,1}\textbf{X}\textbf{A}_{2,1} + \textbf{A}_{1,2}\textbf{X}\textbf{A}_{2,2} + \cdots + 
\textbf{A}_{1,k}\textbf{X}\textbf{A}_{2,k} = \textbf{A}_3
\end{align}
if and only if $\textbf{x}=$vec$(\textbf{X})$ is a solution of the equation $\textbf{K}\textbf{x}=\textbf{a}_3$ with $\textbf{K}= \sum_{i=1}^k (\textbf{A}_{2,i}^T \otimes \textbf{A}_{1,i})$  and $\textbf{a}_3=$vec$(\textbf{A}_3)$.
\end{lemma}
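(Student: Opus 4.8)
The plan is to reduce the matrix equation to an equivalent vector equation by applying the vectorization operator to both sides and exploiting the standard Kronecker identity for a triple matrix product. First I would recall the fundamental identity $\text{vec}(\textbf{A}\textbf{X}\textbf{B}) = (\textbf{B}^T \otimes \textbf{A})\text{vec}(\textbf{X})$, valid for any conformable $\textbf{A}$, $\textbf{X}$, $\textbf{B}$; this is precisely the vectorization rule already invoked in step (b) of the proof of \emph{Lemma 1}, so it is available without further justification.

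Next I would apply $\text{vec}(\cdot)$ to both sides of the matrix equation. Since $\text{vec}(\cdot)$ is linear, the left-hand side becomes $\sum_{i=1}^k \text{vec}(\textbf{A}_{1,i}\textbf{X}\textbf{A}_{2,i})$, and applying the triple-product identity to each summand with $\textbf{A} = \textbf{A}_{1,i}$ and $\textbf{B} = \textbf{A}_{2,i}$ gives $\sum_{i=1}^k (\textbf{A}_{2,i}^T \otimes \textbf{A}_{1,i})\text{vec}(\textbf{X})$. Factoring out $\text{vec}(\textbf{X}) = \textbf{x}$ and writing $\textbf{K} = \sum_{i=1}^k (\textbf{A}_{2,i}^T \otimes \textbf{A}_{1,i})$, this is exactly $\textbf{K}\textbf{x}$, while the right-hand side becomes $\text{vec}(\textbf{A}_3) = \textbf{a}_3$. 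Hence the matrix equation yields $\textbf{K}\textbf{x} = \textbf{a}_3$.

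Finally, to secure the \emph{if and only if}, I would use that $\text{vec}(\cdot)$ is a bijection from $\mathbb{C}^{N_1 \times N_2}$ onto $\mathbb{C}^{N_1 N_2 \times 1}$ with inverse $\text{mat}(\cdot)$. Because every manipulation above is an identity rather than a one-way implication (linearity and the Kronecker rule are equalities, and vectorization is invertible), applying $\text{mat}(\cdot)$ to $\textbf{K}\textbf{x} = \textbf{a}_3$ recovers the original matrix equation verbatim. Thus $\textbf{X}$ solves the matrix equation if and only if $\textbf{x} = \text{vec}(\textbf{X})$ solves $\textbf{K}\textbf{x} = \textbf{a}_3$.

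There is no substantive obstacle here: the entire content is the triple-product vectorization identity together with the linearity and bijectivity of $\text{vec}(\cdot)$. The only point deserving a moment of care is confirming the placement of the transpose in the Kronecker factor, namely that it is $\textbf{A}_{2,i}^T \otimes \textbf{A}_{1,i}$ and not the reverse ordering; this follows directly from the column-stacking convention for $\text{vec}(\cdot)$ adopted throughout the paper.
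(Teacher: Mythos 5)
Your proof is correct and complete: the triple-product vectorization identity $\text{vec}(\textbf{A}\textbf{X}\textbf{B}) = (\textbf{B}^T \otimes \textbf{A})\text{vec}(\textbf{X})$, the linearity of $\text{vec}(\cdot)$, and its bijectivity are exactly what is needed, and you correctly flag the transpose placement as the only point requiring care. The paper itself imports this lemma from \cite{lemma1_equation} without reproducing a proof, and your argument is the standard one underlying that reference, so there is nothing to reconcile.
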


Using \emph{Lemma 2},
the optimal $\bar{\textbf{f}}^{(t)^\dag} \left(=\text{vec}(\bar{\textbf{F}}^{(t)^\dag})\right)$ is expressed as
\begin{align}
\bar{\textbf{f}}^{(t)^\dag} = \textbf{W}_f^{(t)^{-1}}\textbf{w}_{f,0}^{(t)}
\end{align}
where
$\textbf{W}_f^{(t)}=( \textbf{G}_1^{(t)^T}\otimes\textbf{W}_{f,1}^{(t)} )
                   +( \textbf{G}_2^{(t)^T}\otimes\textbf{W}_{f,2}^{(t)} )
                   +\{ \textbf{G}_r^{(t)^T}\otimes(N_rp_r)^{-1}w_f^{(t)}\textbf{I}_{N_r} \}$
and
$\textbf{w}_{f,0}^{(t)}=\text{vec}(\textbf{W}_{f,0}^{(t)})$. Note that the optimal $\bar{\textbf{F}}^{(t)}$ in time slot $t$ involves the past channels from time slot $t-m$ to time slot $t$ since it is a function of $ \textbf{G}_1^{(t)}$, $\textbf{G}_2^{(t)}$, and
$ \textbf{G}_r^{(t)} $ in \eqref{G}. Once the optimal $\bar{\textbf{F}}^{(t)}$ is obtained, the optimal $\alpha^{(t)^\dag}$ can be obtained from \eqref{L3} such that 
\begin{align}
\alpha^{(t)^\dag} = \{ N_rp_r 
\text{tr}(\bar{\textbf{F}}^{(t)}\textbf{G}_r^{(t)}\bar{\textbf{F}}^{(t)^H})^{-1} \}^{1/2}
\label{optimal_alpha}
\end{align}
Substituting \eqref{optimal_alpha} into \eqref{L23}, the optimal $\lambda^{(t)^\dag}$ is given by
\begin{align}
\lambda^{(t)^\dag} = (N_rp_r)^{-2}w_f^{(t)} \text{tr}(\bar{\textbf{F}}^{(t)}\textbf{G}_r^{(t)}\bar{\textbf{F}}^{(t)^H})
\label{optimal_lambda}
\end{align} Then, the optimal relay beamforming matrix in time slot $t$ is obtained as $\textbf{F}^{(t)^\dag}=\alpha^{(t)^\dag}\text{mat}(\bar{\textbf{f}}^{(t)^\dag})$. In the next time slot, the optimal relay beamforming matrix is derived in the same way.

\emph{Remark}: The  proposed relay beamforming design method can be applied to one-way FD AF relay systems.
Unlike the two-way FD AF relay systems, using matrix inverse operation, the relay beamforming matrix solution can be obtained without \emph{Lemma 2}.

\subsection{Receive beamforming design at sources}

The problem of receive beamforming matrices design in each time slot is formulated as
\begin{align}
\mathbb{P}_r^{(t)}:
\mathop{\mbox{min}}_{\textbf{R}_1^{(t)},\textbf{R}_2^{(t)}} J(\textbf{R}_1^{(t)},\textbf{R}_2^{(t)})
\label{}
\end{align}
where the sum of MSE $J(\textbf{R}_1^{(t)},\textbf{R}_2^{(t)})$ is defined as
\begin{equation}
\begin{split}
J(\textbf{R}_1^{(t)},\textbf{R}_2^{(t)}) &=
N_s(p_1+p_2)
\\
&- \alpha^{(t)^{-1}} \{
p_2\text{tr}(\textbf{W}_{r,1}^{(t)^H}\textbf{R}_1^{(t)}) +
p_2\text{tr}(\textbf{W}_{r,1}^{(t)}\textbf{R}_1^{(t)^H}) +
p_1\text{tr}(\textbf{W}_{r,2}^{(t)^H}\textbf{R}_2^{(t)}) +
p_1\text{tr}(\textbf{W}_{r,2}^{(t)}\textbf{R}_2^{(t)^H}) \}
\\
&+ \alpha^{(t)^{-2}} \{
\text{tr}(\textbf{W}_{r,3}^{(t)}\textbf{R}_1^{(t)}\textbf{R}_1^{(t)^H}) +
\text{tr}(\textbf{W}_{r,4}^{(t)}\textbf{R}_2^{(t)}\textbf{R}_2^{(t)^H}) \}
\label{Lag2}
\end{split}
\end{equation}
where
\begin{equation}
\begin{split}
& \textbf{W}_{r,1}^{(t)}=\textbf{H}_{r,1}^{(t)}\textbf{F}^{(t)}\textbf{H}_{2,r}^{(t-1)^H}, \\
& \textbf{W}_{r,2}^{(t)}=\textbf{H}_{r,2}^{(t)}\textbf{F}^{(t)}\textbf{H}_{1,r}^{(t-1)^H}, \\
& \textbf{W}_{r,3}^{(t)}=\textbf{H}_{r,1}^{(t)}\textbf{F}^{(t)}\textbf{G}_1^{(t)}\textbf{F}^{(t)^H}\textbf{H}_{r,1}^{(t)^H}+
(N_sp_1\sigma_{e,1}^2+\sigma_{n,1}^2)\textbf{I}_{N_s}, \\
& \textbf{W}_{r,4}^{(t)}=\textbf{H}_{r,2}^{(t)}\textbf{F}^{(t)}\textbf{G}_2^{(t)}\textbf{F}^{(t)^H}\textbf{H}_{r,2}^{(t)^H}+
(N_sp_2\sigma_{e,2}^2+\sigma_{n,2}^2)\textbf{I}_{N_s}.
\end{split}
\end{equation}
By solving $\frac{\partial J(\textbf{R}_1^{(t)},\textbf{R}_2^{(t)})} {\partial \textbf{R}_i^{(t)}}=\textbf{0}_{N_s}$ for $i \in \{ 1,2 \}$, the receive beamforming matrices at the sources are given by 
\begin{equation}
\begin{split}
\textbf{R}_l^{(t)}
= \alpha^{(t)}p_{\bar{l}}
\{\textbf{H}_{r,l}^{(t)}\textbf{F}^{(t)}\textbf{G}_l^{(t)}\textbf{F}^{(t)^H}\textbf{H}_{r,l}^{(t)^H}
+(N_sp_l\sigma_{e,l}^2+\sigma_{n,l}^2)\textbf{I}_{N_s}\}^{-1}
\textbf{H}_{r,l}^{(t)}\textbf{F}^{(t)}\textbf{H}_{\bar{l},r}^{(t-1)}
\end{split}
\end{equation}
where $\bar{l}$ means the index of the other source ($\bar{l}=2$ if $l=1$ and $\bar{l}=1$ if $l=2$).

\subsection{An iterative algorithm for joint beamforming design}

An iterative algorithm is proposed for the joint relay and receive beamforming design.
The relay beamforming matrix and receive beamforming matrices are optimized alternately until convergence in each time slot.
The proposed iterative algorithm is shown in Algorithm 1.
\begin{table}
\centering
\begin{tabular}{l}
\hline
Algorithm 1. The proposed iterative algorithm in time slot $t$ \\ 
\hline
 1:  \hspace{0.15in} Initialize: $\textbf{F}^{(t)}=\textbf{I}_{N_r}$ and $\textbf{R}_i^{(t)}=\textbf{I}_{N_s}$ for $i \in \{ 1,2 \}$ \\
 2:  \hspace{0.15in} Repeat \\
 3:  \hspace{0.35in} Update $\bar{\textbf{F}}^{(t)}$ by using $\textbf{R}_i^{(t)}$ for $i \in \{ 1,2 \}$ \\
 4:  \hspace{0.35in} Update $\alpha^{(t)}$ by using $\bar{\textbf{F}}^{(t)}$ \\
 5:  \hspace{0.35in} Update $\textbf{F}^{(t)}$ by using $\bar{\textbf{F}}^{(t)}$ and $\alpha^{(t)}$ \\
 6:  \hspace{0.35in} Update $\textbf{R}_i^{(t)}$ for $i \in \{ 1,2 \}$ by using $\textbf{F}^{(t)}$ \\
 7:  \hspace{0.35in} Update $J(\textbf{F}^{(t)},\alpha^{(t)},\textbf{R}_1^{(t)},\textbf{R}_2^{(t)})$ \\
 8:  \hspace{0.15in} Until convergence. \\
\hline
\end{tabular}
\end{table}

The achievable sum rate in time slot $t$ is given by
\begin{equation}
\begin{split}
R^{(t)}
= \mathbb{E}\big[ 
\text{log}_2 \text{det} \{ \textbf{I}_{N_s} +
p_1 \textbf{R}_2^{(t)^H}\textbf{H}_{r,2}^{(t)}\bar{\textbf{F}}^{(t)}\textbf{H}_{1,r}^{(t-1)}
\textbf{H}_{1,r}^{(t-1)^H}\bar{\textbf{F}}^{(t)^H}\textbf{H}_{r,2}^{(t)^H}\textbf{R}_2^{(t)}
(\textbf{R}_2^{(t)^H}\textbf{A}_2^{(t)}\textbf{R}_2^{(t)})^{-1} \}
\big]
\\
+ \mathbb{E}\big[
\text{log}_2 \text{det} \{ \textbf{I}_{N_s} +
p_2 \textbf{R}_1^{(t)^H}\textbf{H}_{r,1}^{(t)}\bar{\textbf{F}}^{(t)}\textbf{H}_{2,r}^{(t-1)}
\textbf{H}_{2,r}^{(t-1)^H}\bar{\textbf{F}}^{(t)^H}\textbf{H}_{r,1}^{(t)^H}\textbf{R}_1^{(t)}
(\textbf{R}_1^{(t)^H}\textbf{A}_1^{(t)}\textbf{R}_1^{(t)})^{-1} \}
\big]
\end{split}
\end{equation}
where $\textbf{A}_l^{(1)} = \sigma_{n,r}^2\textbf{H}_{r,l}^{(1)}\bar{\textbf{F}}^{(1)}\bar{\textbf{F}}^{(1)^H}\textbf{H}_{r,l}^{(1)^H}
+ \alpha^{(1)^{-2}}p_l\boldsymbol{\Delta}_{l,l}^{(1)}\boldsymbol{\Delta}_{l,l}^{(1)^H} + \alpha^{(1)^{-2}}\sigma_{n,l}^2\textbf{I}_{N_s}$
and
\begin{equation}
\begin{split}
\textbf{A}_l^{(t)}
&= \sigma_{n,r}^2\textbf{H}_{r,l}^{(t)}\bar{\textbf{F}}^{(t)}\bar{\textbf{F}}^{(t)^H}\textbf{H}_{r,l}^{(t)^H}
+ \sum_{i=2}^{t}  \bigg\{  \textbf{H}_{r,l}^{(t)}\bar{\textbf{F}}^{(t)} \prod_{j=1}^{i-1}( \boldsymbol{\Delta}_{r,r}^{(t-j)}\textbf{F}^{(t-j)})
\\
&(p_l\textbf{H}_{l,r}^{(t-i)}\textbf{H}_{l,r}^{(t-i)^H} + p_{\bar{l}}\textbf{H}_{\bar{l},r}^{(t-i)}\textbf{H}_{\bar{l},r}^{(t-i)^H} + \sigma_{n,r}^2\textbf{I}_{N_r})
\prod_{j=t+1-i}^{t-1}(\textbf{F}^{(j)^H}\boldsymbol{\Delta}_{r,r}^{(j)^H})
\bar{\textbf{F}}^{(t)^H}\textbf{H}_{r,l}^{(t)^H}
\bigg\}
\\
&+ \alpha^{(t)^{-2}}p_l\boldsymbol{\Delta}_{l,l}^{(t)}\boldsymbol{\Delta}_{l,l}^{(t)^H}
+ \alpha^{(t)^{-2}}\sigma_{n,l}^2\textbf{I}_{N_s}
\end{split}
\end{equation}
and $\bar{l}$ means the index of the other source ($\bar{l}=2$ if $l=1$ and $\bar{l}=1$ if $l=2$).

The sum of MSE $J_{m}^{(t)}=J(\textbf{F}^{(t)},\alpha^{(t)},\textbf{R}_1^{(t)},\textbf{R}_2^{(t)})$ depends on $m$ and $t$.
When $m=\infty$, $J_{m}^{(t)}$ increases or does not change as $t$ increases.
When $m$ is a finite number, $J_{m}^{(t)}$ increases or does not change as $t$ increases only if $t \leq m+1$, but
$J_{m}^{(t)}$ decreases as $t$ increases if $t \geq m+2$ because the relay substitutes the oldest beamforming matrix with a new one.
In other words, the MSE of system can be unstable after time slot $m+2$. Based on this observation, we propose an algorithm to determine $m$ in Algorithm 2, where the determined $m$ is denoted by $\hat{m}$. For example, if $J_{m=2}^{(3)} \leq J_{m=2}^{(4)}$, we set $\hat{m}$ to be 2 for stability. 

\begin{table}
\centering
\begin{tabular}{l}
\hline
Algorithm 2. The proposed algorithm for obtaining $\hat{m}$ \\ 
\hline
1: \hspace{0.15in} Initialize: $i=0$ \\
2: \hspace{0.15in} Repeat \\
3: \hspace{0.35in} $i \leftarrow i+1$ \\
4: \hspace{0.35in} If $J_{m=i}^{(i+1)} \leq J_{m=i}^{(i+2)}$ \\
5: \hspace{0.55in} $\hat{m}=i$ \\
6: \hspace{0.35in} End \\
7: \hspace{0.15in} Until $\hat{m}$ for stability is determined. \\
\hline
\end{tabular}
\end{table}

\section{Numerical results}

In this section, we numerically evaluate the sum of MSE and the achievable sum rate for the proposed scheme, conventional scheme, and the HD two-way relaying scheme.
The proposed scheme updates the relay beamforming matrix $\textbf{F}^{(t)}$ and receive beamforming matrix $\textbf{R}_i^{(t)}$ for $i \in \{ 1,2 \}$ every time slot, which is based on the channels in the $m$ latest time slots. Unlike the proposed scheme, the conventional scheme updates the relay beamforming matrix based only on the channels in the current time slot,
i.e. $\textbf{G}_c^{(t)}=\textbf{0}_{N_r}$. The HD two-way relaying scheme is used as a referential scheme. 
The sum of MSE and achievable sum rate are obtained by averaging over 100 channel realizations.
The number of iteration for relay and receive beamforming design is 30.

The channel matrices $\textbf{H}_{i,j}^{(t)}$ for $(i,j) \in \{ (1,r),(r,1),(2,r),(r,2) \}$ are set to follow Rayleigh fading, i.e., the elements of each channel matrix are independent complex Gaussian random variables with zero mean and unit variance.
The channel estimation error matrices $\boldsymbol{\Delta}_{i,i}^{(t)}$ for $i \in \{ 1,2,r \}$ are set to follow Rayleigh fading, i.e., the elements of each channel matrix are independent complex Gaussian random variables with zero mean and variance $\sigma_{e,i}^2$.
For simplicity, it is assumed that all the nodes transmit with power $p$, i.e., the power is set to be $p_1=p_2=p_r=p=1$.
The noise variance and the loopback channel estimation error variance at each node are set to be as $\sigma_{n,1}^2=\sigma_{n,2}^2=\sigma_{n,r}^2=\sigma_n^2$ and $\sigma_{e,1}^2=\sigma_{e,2}^2=\sigma_{e,r}^2=\sigma_e^2$.
SNR and INR are defined as $p/\sigma_n^2$ and $\sigma_e^2/\sigma_n^2$, respectively.

\begin{figure} [th!]
\centering
\subfigure[Sum of MSE versus number of iterations for different INR values.]
{
\includegraphics[width=10cm]{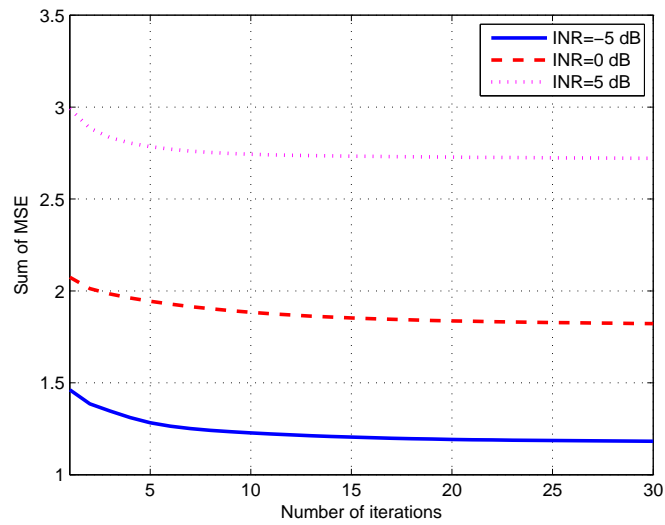}
\label{fig_result_iteration_MSE}
}
\subfigure[Achievable sum rate versus number of iterations for different INR values.]
{
\includegraphics[width=10cm]{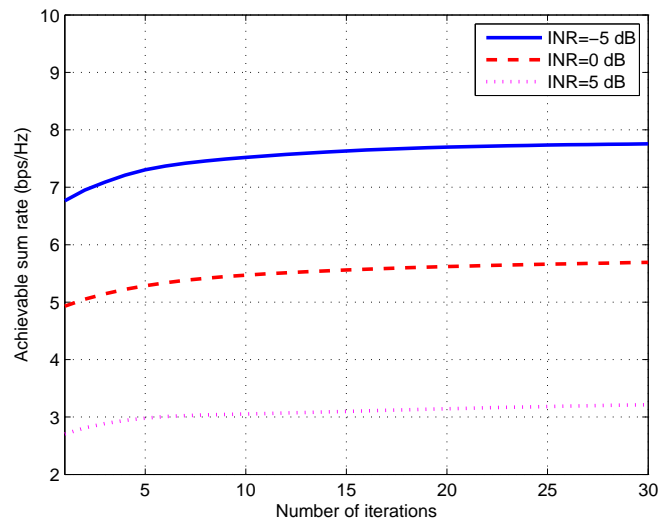}
\label{fig_result_iteration_rate}
}
\caption{Performance of the proposed scheme versus number of iterations for different INR}
\label{fig_result_iteration}
\end{figure}
Fig. \ref{fig_result_iteration} shows the sum of MSE and the achievable sum rate versus number of iterations for different INR values in the 10th time slot when SNR$=5$dB, $N_s=2$, $N_r=5$, and $m=\infty$, respectively.
The proposed scheme addresses two sub-problems for relay beamforming design and receive beamforming design at sources.
Since the solution for each sub-problem is optimal, the sum of MSE  decreases with each iteration.

\begin{figure} [t!]
\centering
\subfigure[Sum of MSE versus SNR for different INR values.]
{
\includegraphics[width=10cm]{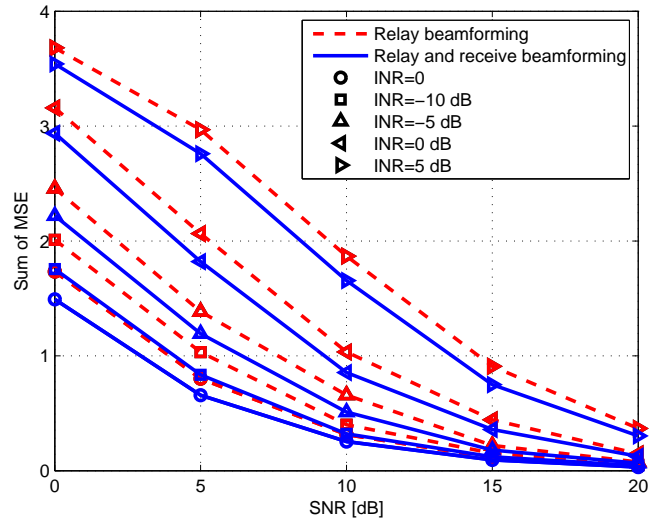}
\label{fig_result_SNR_pro_MSE}
}
\subfigure[Achievable sum rate versus SNR for different INR values.]
{
\includegraphics[width=10cm]{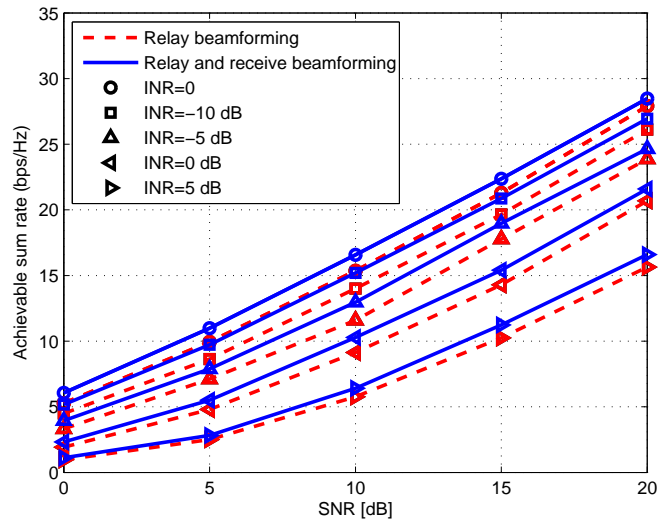}
\label{fig_result_SNR_pro_rate}
}
\caption{Performance comparisons between the proposed relay beamforming and the proposed joint beamforming schemes versus SNR}
\label{fig_result_SNR_pro}
\end{figure}
Fig. \ref{fig_result_SNR_pro} shows the sum of MSE and achievable sum rate versus SNR for different INR values in the 10th time slot when $N_s=2$, $N_r=5$, and $m=\infty$, respectively.
As shown, the proposed joint beamforming design achieves better performance than  the proposed design of relay beamforming only.
As SNR increases and INR decreases, the sum of MSE decreases and the achievable sum rate increases for both the proposed schemes.

\begin{figure} [!t]
\centering
\subfigure[Sum of MSE versus time slot index for different values of $m$ and INR]
{
\includegraphics[width=10cm]{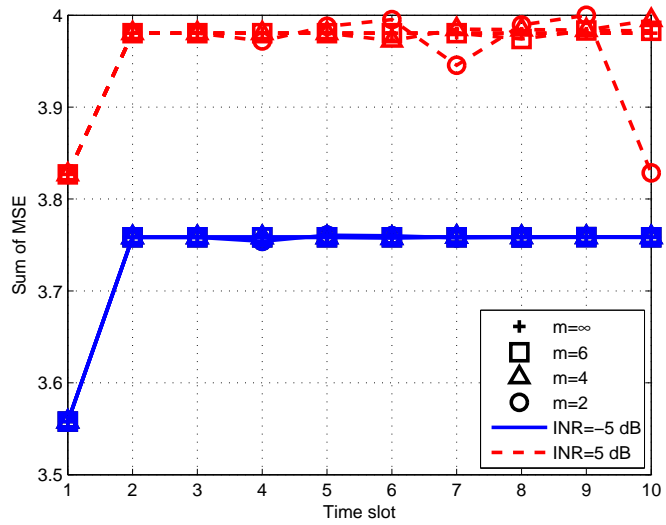}
\label{fig_result_m_MSE}
}
\subfigure[Achievable sum rate versus time slot index for different values of $m$ and INR]
{
\includegraphics[width=10cm]{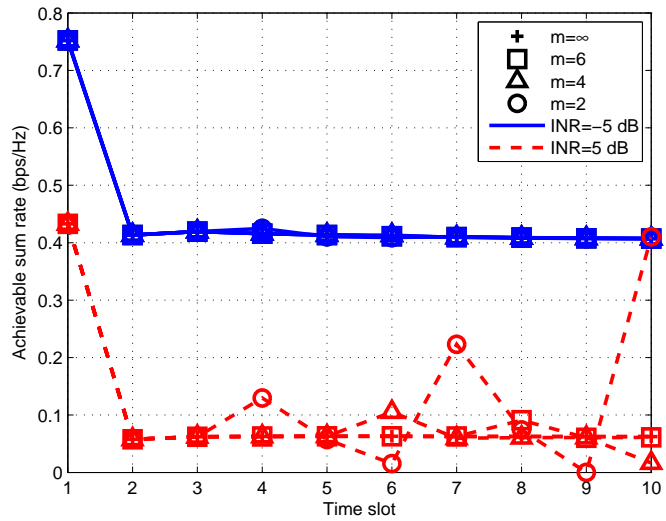}
\label{fig_result_m_rate}
}
\caption{Performance of the proposed joint beamforming design scheme versus time slot index for different values of $m$ and INR}
\label{fig_result_m}
\end{figure}

Fig. \ref{fig_result_m} shows the sum of MSE and the achievable sum rate versus the time slot index for different values of $m$ and INR when SNR$=-10$ dB, $N_s=2$, and $N_r=5$.
The proposed scheme with $m=\infty$ and $m=\hat{m}$ are the methods that the relay and receive beamforming matrices are designed with all the beamforming and channel matrices from the first time slot and with only those in the $\hat{m}$ latest time slots, respectively.
If $\hat{m}$ is too small, the performance is unstable after time slot $\hat{m}+2$ because it cannot handle the propagated effect of the residual loopback SI.
For instance, when $m=2, 4$ under INR$=-5$ dB and $m=2, 4, 6$ under INR$=5$ dB, both of the sum MSE and the sum rate oscillate and the oscillations become larger as either INR increases or $m$ decreases.
On the other hand, if $m \geq 6$ under INR$=-5$ dB, the performance gap from the case of $m=\infty$ is marginal.

\begin{figure} [t!]
\centering
\includegraphics[width=10.0cm]{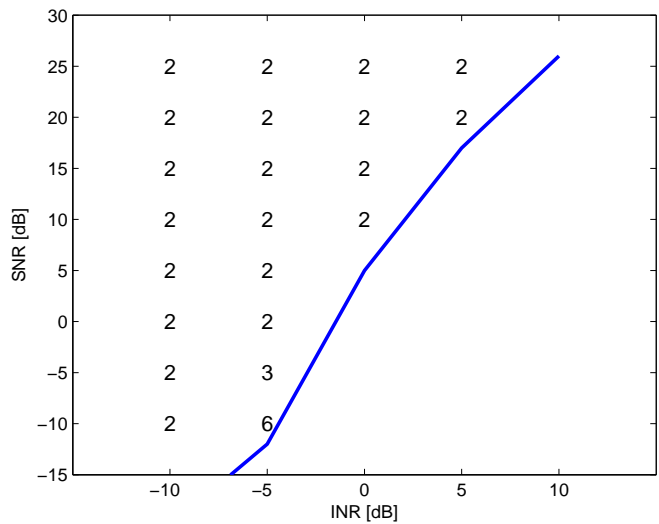}
\caption{The criterion for duplex mode selection and $\hat{m}$ for stability in the FD mode.}
\label{fig_result_range}
\end{figure}

Fig. \ref{fig_result_range} shows the criteria for the duplex mode selection and the selection of $\hat{m}$ for stability in the FD mode when $N_s=2$ and $N_r=5$.
In the region above the solid line, the achievable sum rate of the proposed scheme in FD mode is greater than that of HD mode, so FD is preferred to HD. However, below the solid line, HD is better than FD. This figure also reveals that as INR increases, the SNR required for selection of FD grows. For each pair of SNR and INR, the values of $\hat{m}$ for stability are presented, too. For the region where HD is preferred, $\hat{m}$ is not specified because the residual loopback SI in the HD mode does not exist.

\begin{figure} [!t]
\centering
\subfigure[Sum of MSE versus SNR for different INR values.]
{
\includegraphics[width=10cm]{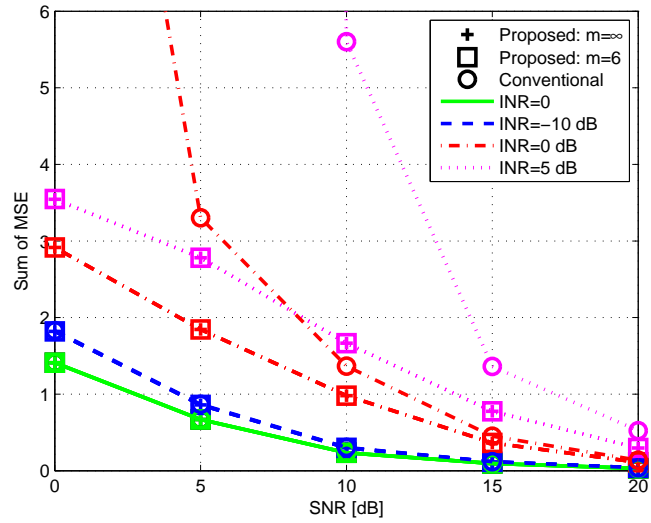}
\label{fig_result_SNR_procon_MSE}
}
\subfigure[Achievable sum rate versus SNR for different INR values.]
{
\includegraphics[width=10cm]{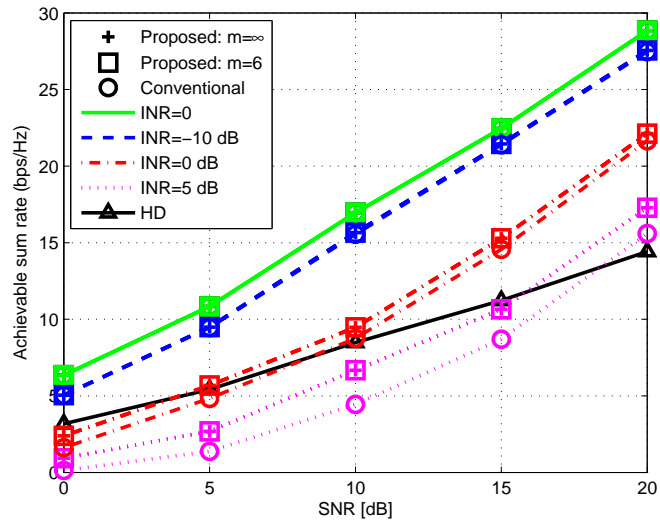}
\label{fig_result_SNR_procon_rate}
}
\caption{Performance comparison between the proposed scheme of joint beamforming design and the conventional scheme versus SNR}
\label{fig_result_SNR_procon}
\end{figure}

Fig. \ref{fig_result_SNR_procon_MSE} shows the sum of MSE versus SNR for different INR values in the 10th time slot when $N_s=2$ and $N_r=5$. It is shown that the proposed scheme of joint beamforming design provides a lower MSE than the conventional scheme.
As SNR increases and INR decreases, the MSE decreases for both the proposed and conventional schemes. The performance difference between the two schemes decreases as SNR increases and INR decreases.
This is because the proposed scheme updates the relay and receive beamforming matrices based on the covariance matrix of loopback channel estimation error.
This figure also verifies that $m=6$ provides almost the same MSE performance as $m=\infty$.
Fig. \ref{fig_result_SNR_procon_rate} compares the achievable sum rates of the proposed scheme of joint beamforming design and the conventional scheme at the 10th time slot when $N_s=2$ and $N_r=5$.
In either low SNR or high INR regime, FD is not beneficial relative to HD because the cost of handling the residual loopback SI is higher than the HD loss in those regimes, which strongly suggests adaptive selection between HD and FD according to SNR and INR; if INR is less than $0$ dB when SNR $\geq$ $5$ dB or less than $5$ dB when SNR $\geq$ $16$ dB, FD with the proposed beamforming scheme is preferred to HD.
If we focus on the scenarios where FD is preferred, $m=6$ offers almost the same achievable sum rate as $m=\infty$, as in Fig. \ref{fig_result_SNR_procon_MSE}.

\begin{figure}[t]
\centering
\includegraphics[width=11cm]{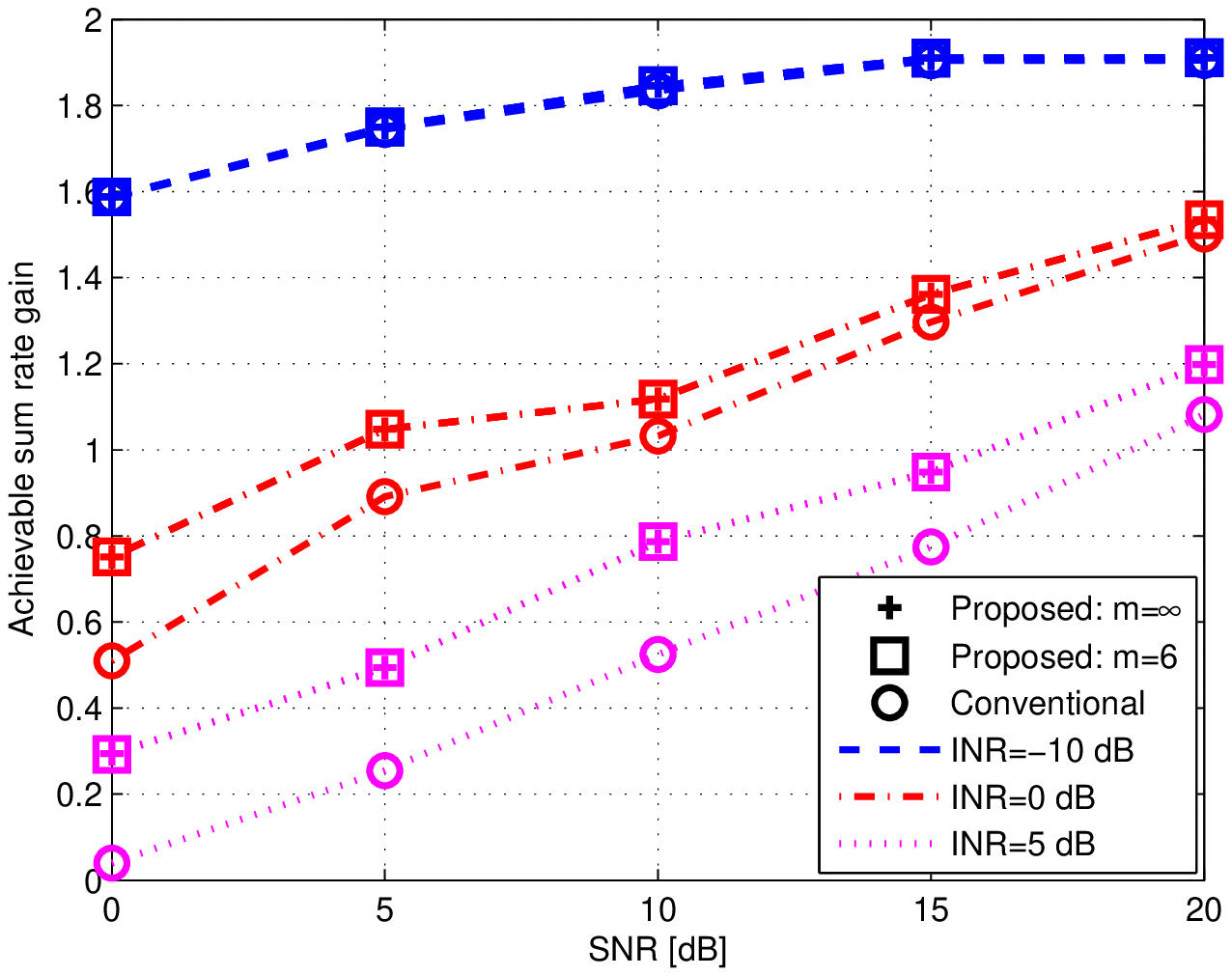}
\caption{Achievable sum rate gain versus SNR for different INR values.}
\label{fig_result_SNR_procon_gain}
\end{figure}

Fig. \ref{fig_result_SNR_procon_gain} shows the achievable sum rate gain versus SNR for different INR values at the 10th time slot when $N_s=2$ and $N_r=5$. The achievable sum rate gain is defined as the ratio between the achievable sum rate of the proposed joint design scheme and that of the HD scheme. For  example, the achievable sum rate gain is 2 and 1.9 when INR$=0$ and INR is $-10$ dB, respectively. The gain increases as SNR decreases or INR increases.

\begin{figure} [!t]
\centering
\subfigure[Sum of MSE versus time slot index]
{
\includegraphics[width=10cm]{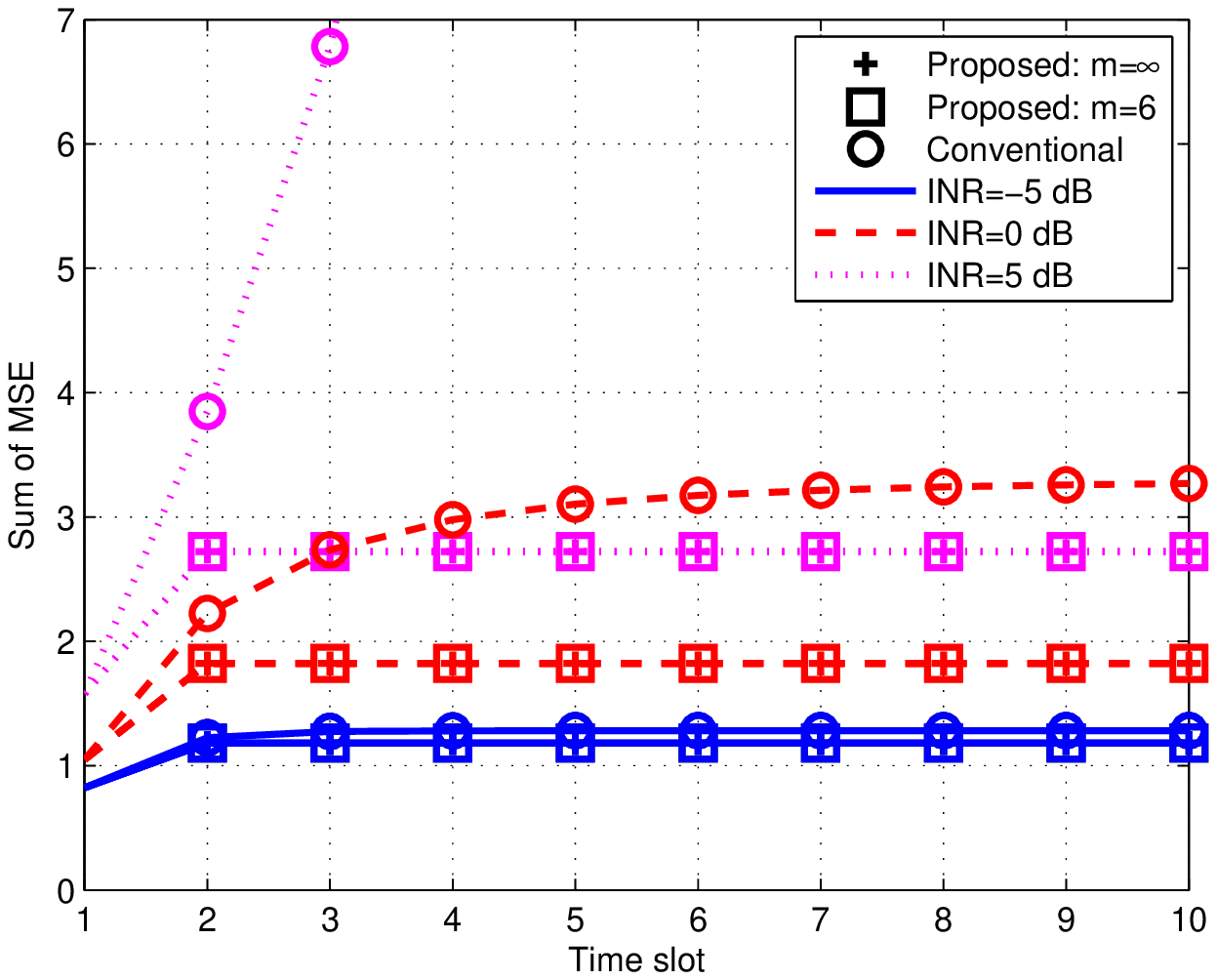}
\label{fig_result_ts_MSE}
}
\subfigure[Achievable sum rate versus time slot index]
{
\includegraphics[width=10cm]{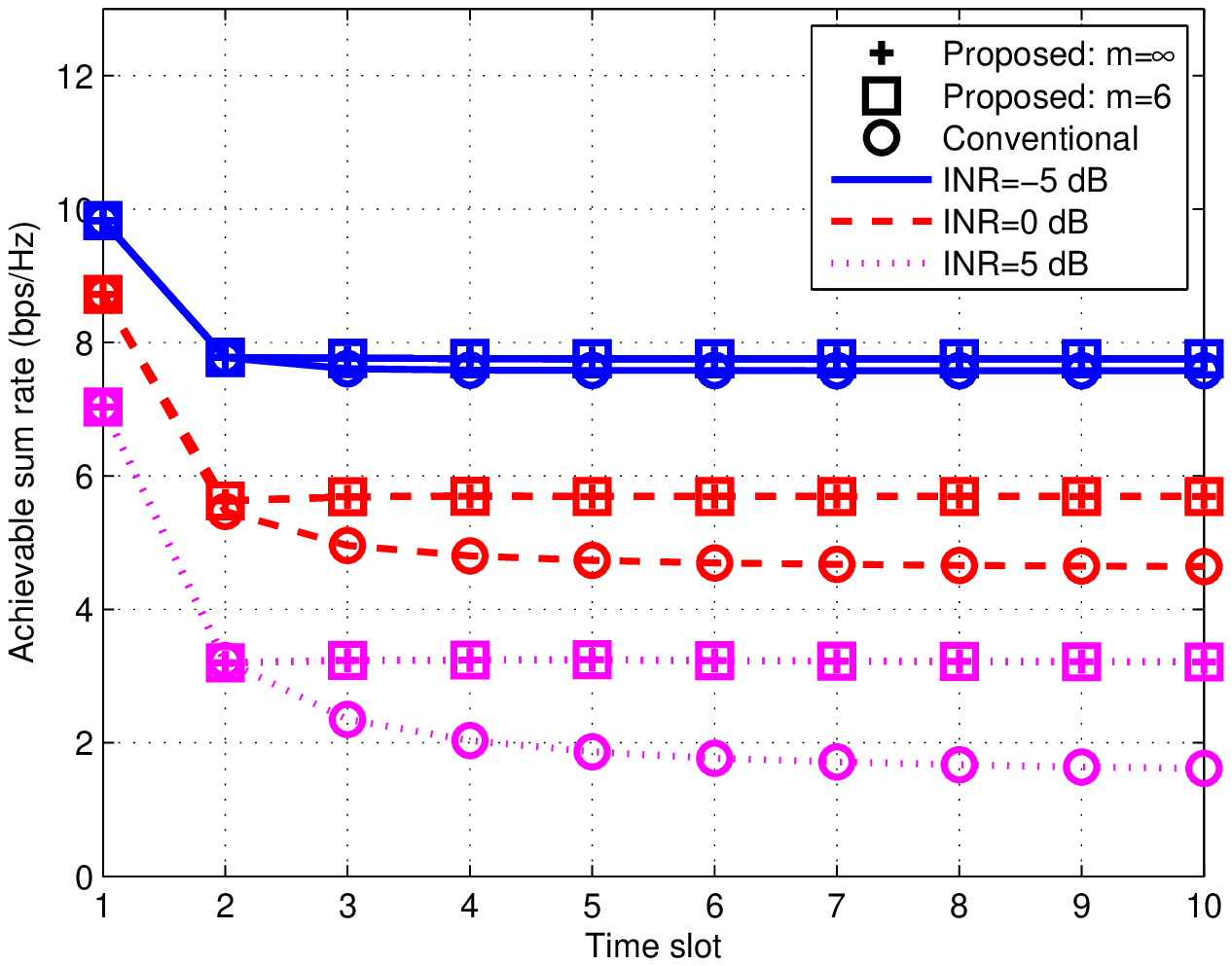}
\label{fig_result_ts_rate}
}
\caption{Performance comparison between the proposed and conventional schemes versus time slot index}
\label{fig_result_ts}
\end{figure}

\begin{figure} [!t]
\centering
\subfigure[Sum of MSE versus SNR for different values of $N_r$]
{
\includegraphics[width=10cm]{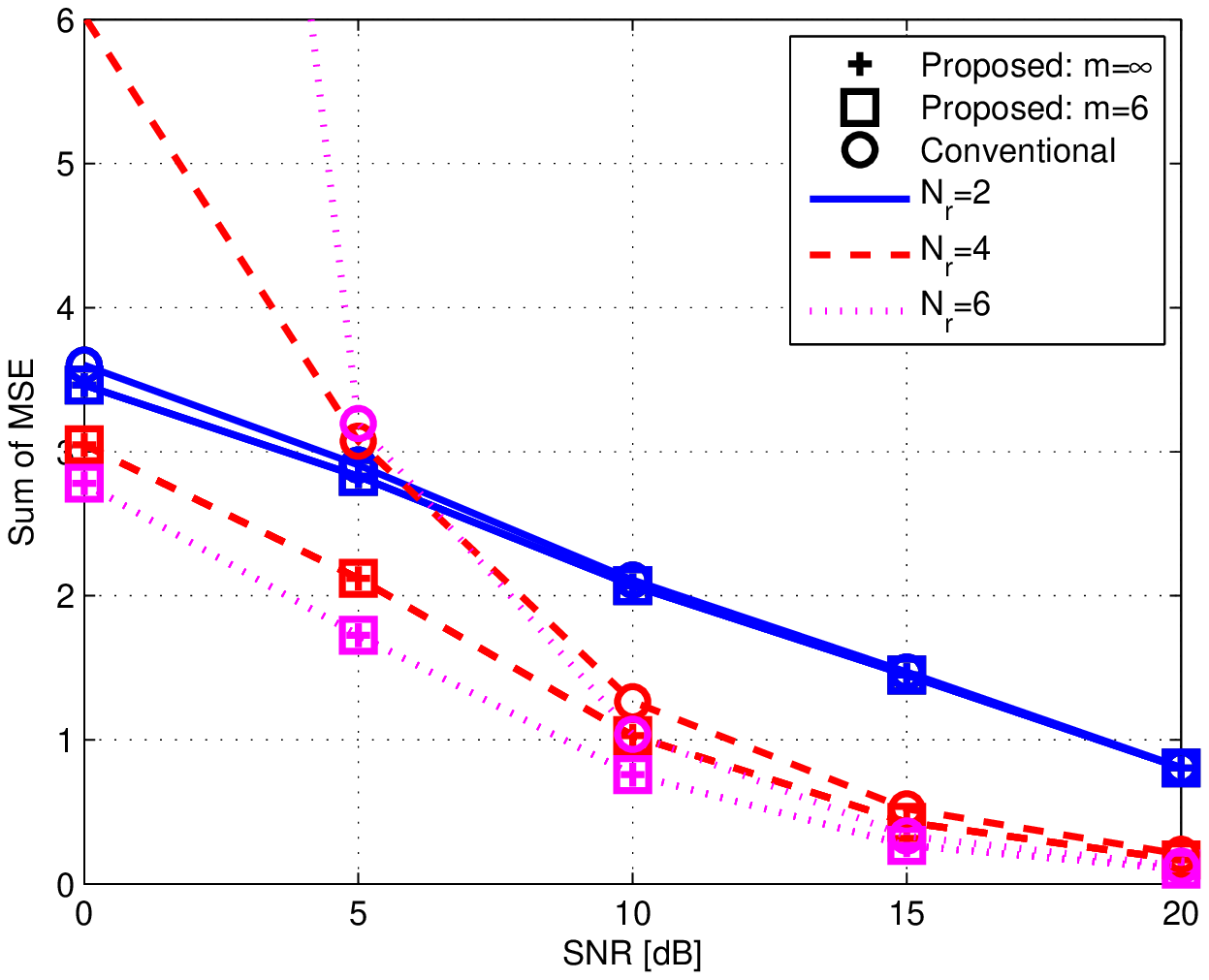}
\label{fig_result_Nr_MSE}
}
\subfigure[Achievable sum rate versus SNR for different values of $N_r$]
{
\includegraphics[width=10cm]{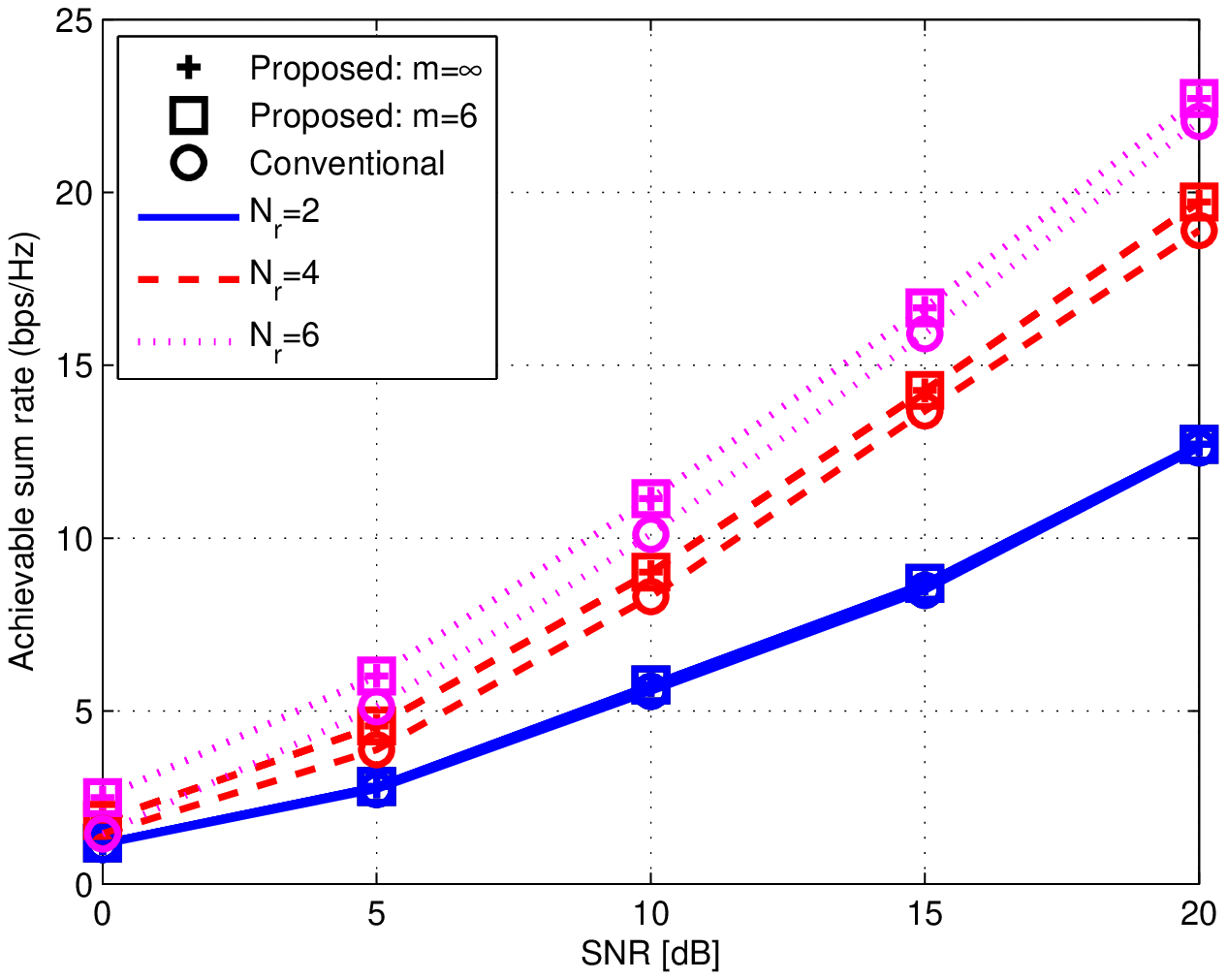}
\label{fig_result_Nr_rate}
}
\caption{Performances comparison between the proposed and conventional schemes versus SNR}
\label{fig_result_Nr}
\end{figure}

Fig. \ref{fig_result_ts} shows the sum of MSE and achievable sum rate versus the time slot index for different INR (i.e., INR$=-5$ dB, $0$ dB, and $5$ dB) when SNR$=5$ dB, $N_s=2$, and $N_r=5$. The performance gap between the proposed joint design scheme and the conventional scheme is shown to be larger as the time slot index increase, for both MSE and sum rate. The error propagation effect of residual loopback SI over time slots is not properly addressed in the conventional scheme and accordingly the loss due to the error propagation increases with time slot index. Interestingly, the performance gap between the first and the second time slots is greater than that between the second and the third time slots. This is because the residual SI at relay starts to affect from the second time slot whereas the residual SI at source affects from the first time slot. As in the previous figures, it is verified again that $m=6$ offers almost the same performance as $m=\infty$.

Fig. \ref{fig_result_Nr} presents the sum of MSE and achievable sum rate versus SNR for different values of $N_r$ at the $10$th time slot when $N_s=2$ and INR$=0$ dB.
As the $N_r$ increases, the MSE decreases and achievable sum rate increases for both the proposed joint design scheme and the conventional scheme. The performance gap between the proposed and conventional schemes increases with $N_r$ because the multiple antennas at relay are more efficiently used in the proposed scheme.

\section{Conclusion}

This paper investigated joint design of relay beamforming and receive beamforming at sources in  FD two-way AF relay systems under a relay transmit power constraint. We analyzed the coupled effect of beamforming matrix design across time slots and derived the optimal bemforming matrices in closed form at each time slot in terms of MMSE, considering the coupled effect. To reduce the burden on the beamforming design incorporating all the time slots from the first time slot to the current time slot,  we also proposed the beamforming design method based only on the $m$ latest time slots. It was shown that if $m$ is not too small, the performance degradation due to a limited number of $m$ was marginal. With numerical analysis, we revealed when full-duplex operation is beneficial compared to HD operation, and proposed adaptive selection between FD and HD according to INR and SNR.


\begin{thebibliography}{11}

\bibitem{Laneman}
J. N. Laneman, D. N. C. Tse, and G. W. Wornell, ``Cooperative diversity in wireless networks: Efficient protocols and outage behavior,''
\newblock {\em IEEE Trans. Inf. Theory},
\newblock vol. 50, no. 12, pp. 3062-3080, Dec. 2004.

\bibitem{Rankov1}
B. Rankov and A. Wittneben, ``Achievable rate regions for the two-way relay channel,''
\newblock {\em in Proc. IEEE Int. Symp. Inf. Theory (ISIT)},
\newblock July 2006.

\bibitem{Rankov2}
B. Rankov and A. Wittneben, ``Spectral efficient protocols for half-duplex fading relay channels,''
\newblock {\em IEEE J. Sel. Areas Commun.},
\newblock vol. 25, no. 2, pp. 379-389, Feb. 2007.

\bibitem{Tarokh1}
S. J. Kim, P. Mitran, and V. Tarokh, ``Performance bounds for bidirectional coded cooperation protocols,''
\newblock {\em IEEE Trans. Inf. Theory},
\newblock vol. 54, no. 11, pp. 5235-5241, Nov. 2008.

\bibitem{Tarokh2}
S. J. Kim, N. Devroye, P. Mitran, and V. Tarokh, ``Achievable rate regions and performance comparison of half duplex bi-directional relaying protocols,''
\newblock {\em IEEE Trans. Inf. Theory},
\newblock vol. 57, no. 10, pp. 6405-6418, Oct. 2011.

\bibitem{TWRC_FD_no_SI}
F. S. Tabataba, P. Sadeghi, C. Hucher, and M. R. Pakravan, ``Impact of channel estimation errors and power allocation on analog network coding and routing in two-way relaying,''
\newblock {\em IEEE Trans. Veh. Technol.},
\newblock vol. 61, no. 7, pp. 3223-3239, Sep. 2012.

\bibitem{TWRC_FD_power_allocation}
X. Cheng, B. Yu, X. Cheng, and L. Yang, ``Two-way full-duplex amplify-and-forward relaying,''
\newblock {\em IEEE Military Commun. Conf. (MILCOM)},
\newblock San Diego, USA, Nov. 2013.

\bibitem{TWRC_FD_relay_mode}
Y. Li, T. Wang, Z. Zhao, M. Peng, and W. Wang, ``Relay mode selection and power allocation for hybrid one-way/two-way half-duplex/full-duplex relaying,''
\newblock {\em IEEE Commun. Lett.},
\newblock vol. 19, no. 7, pp. 1217-1220, July 2015.

\bibitem{TWRC_FD_outage_probability}
D. Choi and J. H. Lee, ``Outage probability of two-way full-duplex relaying with imperfect channel state information,''
\newblock {\em IEEE Commun. Lett.},
\newblock vol. 18, no. 6, pp. 933-936, June 2014.

\bibitem{TWRC_FD_PNC}
X. Cheng, B. Yu, X. Cheng, and L. Yang, ``Practical full duplex physical layer network coding,''
\newblock {\em IEEE Vehicular Technol. Conf. (VTC)},
\newblock Seoul, Korea, May 2014.

\bibitem{TWRC_FD_relay_selection}
H. Cui, M. Ma, L. Song, and B. Jiao, ``Relay selection for two-way full duplex relay networks with amplify-and-forward protocol,''
\newblock {\em IEEE Trans. Wireless Commun.},
\newblock vol. 13, no. 7, pp. 3768-3777, July 2014.

\bibitem{TWRC_FD_joint_beamforming}
G. Zheng, ``Joint beamforming optimization and power control for full-duplex MIMO two-way relay channel,''
\newblock {\em IEEE Trans. Signal Process.},
\newblock vol. 63, no. 3, pp. 555-566, Feb. 2015.

\bibitem{lemma1_equation}
G. Li, Y. Wang, and P. Zhang, ``Optimal linear MMSE beamforming for two way multi-antenna relay systems,''
\newblock {\em IEEE Commun. Lett.},
\newblock vol. 15, no. 5, pp. 533-535, May 2011.

\bibitem{MSE1}
R. Wang and M. Tao, ``Joint source and relay precoding designs for MIMO two-way relaying based on MSE criterion,''
\newblock {\em IEEE Trans. Signal Process.},
\newblock vol. 60, no. 3, pp. 1352-1365, Mar. 2012.

\bibitem{scaling_factor1}
W.-C. Choi, S. Choi, A. Heo, and D.-J. Park, ``Joint optimization of source and relay for MIMO two-way relay networks using MSE duality,''
\newblock {\em IEEE Commun. Lett.},
\newblock vol. 18, no. 7, pp. 1246-1249, July 2014.

\bibitem{MSE3}
Z. He, Z. Lang, Y. Rong, and S. Qu, ``Joint transceiver optimization for two-way MIMO relay systems with MSE constraints,''
\newblock {\em IEEE Wireless Commun. Lett.}
\newblock vol. 3, no. 6, pp. 613-616, Dec. 2014.

\bibitem{MSE4}
H. Park, C. Song, H. Lee, and I. Lee, ``MMSE-based filter designs for cognitive multiuser two-way relay networks,''
\newblock {\em IEEE Trans. Veh. Technol.},
\newblock vol. 64, no. 4, pp. 1638-1643, Apr. 2015.

\bibitem{MSE2}
K. X. Nguyen, Y. Rong, and S. Nordholm, ``Simplified MMSE precoding design in interference two-way MIMO relay systems,''
\newblock {\em IEEE Signal Process. Lett.},
\newblock vol. 23, no. 2, pp. 262-266, Feb. 2016.

\bibitem{scaling_factor2}
J. Joung and A. H. Sayed, ``Multiuser two-way amplify-and-forward relay processing and power control methods for beamforming systems,''
\newblock {\em IEEE Trans. Signal Process.},
\newblock vol. 58, no. 3, pp. 1833-1846, Mar. 2010.

\bibitem{lemma_error}
P. Ubaidulla and A. Chockalingam, ``Relay precoder optimization in MIMO-relay networks with imperfect CSI,''
\newblock {\em IEEE Trans. Signal Process.},
\newblock vol. 59, no. 11, pp. 5473-5484, Nov. 2011.

\bibitem{matrix_derivative}
A. Hj{\o}rungnes and D. Gesbert, ``Complex-valued matrix differentiation: Techniques and key results,"
\newblock {\em IEEE Trans. Signal Process.},
\newblock vol. 55, no. 6, pp. 2740-2746, May 2007.





\end{thebibliography}
\end{document}